\newtheorem{definition}{Definition}[section]
\newtheorem{theorem}[definition]{Theorem}
\newtheorem{lemma}[definition]{Lemma}
\newtheorem{corollary}[definition]{Corollary}
\newtheorem{proposition}[definition]{Proposition}
\theoremstyle{definition}
\newtheorem{remark}[definition]{Remark}
\newtheorem{example}[definition]{Example}
\newcommand\tr{ \operatorname{Tr} } 
\newcommand{\Ep}{\mathcal{E}}
\newcommand{\Md}{\mathcal{M}_{d}}
\newcommand{\Bh}{\mathcal{B}(\mathcal{H})}
\newcommand{\Me}{\mathcal{M}_{\mathcal{E}}}
\newcommand{\Mi}{\mathcal{M}_{\mathcal{E}^{\infty}}}
\title{A new bound on quantum Wielandt inequality}
\author{Mizanur Rahaman}
\begin{document}
\maketitle
{\em
Institute for Quantum Computing and Department of Pure Mathematics,
University of Waterloo,}
{\em Waterloo, Ontario N2L 3G1, Canada} 


\begin{abstract}
A new bound on quantum version of Wielandt inequality for positive (not necessarily completely positive) maps has been established. Also  bounds for entanglement breaking and PPT channels are put forward which are better bound than the previous bounds known. We prove that a primitive positive map $\Ep$ acting on $\Md$ that satisfies the Schwarz inequality becomes strictly positive after at most $2(d-1)^2$ iterations. This is to say, that after $2(d-1)^2$
iterations, such a map sends every positive semidefinite matrix to a positive definite one.
This finding does not depend on the number of Kraus operators as the map may not admit any Kraus decomposition.   The motivation of this work is to provide an answer to a question raised in 
the article \cite{Wielandt} by Sanz-Garc\'ia-Wolf and Cirac.  
\end{abstract}
\section{Introduction}
A $d\times d$ stochastic matrix $W$ is called primitive if there exists a number $k\in \mathbb{N}$ such that $(W^k)_{i,j}>0$ for all $(i,j)$, that is all the entries of $W^k$ is strictly positive. The minimum $k$ for which this occurs, denoted by $p(W)$, is called the (classical) index of primitivity of $W$. The  Wielandt's inequality (\cite{Wielandt-original}) states that for a primitive matrix $W\in \Md$, we have \[p(W)\leq (d^2-2d+2).\]
The interesting part is that the above inequality only takes into account the dimension and does not depend on the matrix elements. Wielandt inequality has broad applications in graph theory and combinatorics, number theory and Markov chains (\cite{H-J},\cite{seneta}).

Sanz et al.(see \cite{Wielandt}) extended this concept of the classical Wielandt inequality to quantum channels (trace preserving and completely positive maps) and derived an upper bound on the number of iterations of a channel required to ensure that all the output density matrices must be of full rank. Their main result states that if $\Ep:\Md\rightarrow \Md$ is a primitive quantum channel with $n$ linearly independent Kraus operators, then the quantum Wielandt inequality or the quantum primitive index, denoted by $\omega(\Ep)$ satisfies the following inequality:
\begin{equation}\label{intro-ineq}
\omega(\Ep)\leq (d^2-n+1)d^2.
\end{equation}

Our work is motivated by one of the questions raised in the  Section VI in \cite{Wielandt} which asks for optimal bounds 
of primitivity index of positive maps as opposed to that of quantum channels. Since positive maps do not admit Kraus decompositions, any bound on primitivity index, must therefore be different from the bound given above in Equation \ref{intro-ineq}. Indeed, we show that for a primitive trace preserving positive map defined on $\Md$ which satisfies the Schwarz inequality, the primitivity index $\omega(\Ep)$ satisfies the following inequality:
\begin{equation}\label{intro-ineq-2}
\omega(\Ep)\leq 2(d-1)^2.
\end{equation}
Note that in this case, only dimension of the matrix algebra plays a role and not the linear map itself which is very similar to the spirit of the bound given in (classical) Wielndt's inequality. As positive maps play key role in detecting entanglement, a theory of primitivity bound for positive maps is required.

In a very recent work (see \cite{Mat-Shitov}) by Micha{\l}ek and Shitov, it has been shown that the bound given in \ref{intro-ineq} can be improved and they proposed a new bound of $O(d^2\log d)$ of the  Wielandt inequality. 
Our result remains independent of this new bound as their result is relevant assuming that the map admits a Kraus decomposition, that is, for completely positive maps.  
However, in light of this new result, our bound (in $O(d^2)$) for positive maps makes this research topic more interesting and when applied to completely positive maps it provides more evidence for the validity of a conjecture (conjecture 2) proposed in \cite{Garcia-etall}.                   
\section{Index of primitivity and quantum Wielandt bound}
We begin with some definitions and analyze closely the work of Sanz et al. The following definitions are the key concepts of the so-called non-commutative Perron-Fobinius theory. We refer to the articles \cite{evans-krohn}, \cite{wolf}, \cite{farenick}, \cite{Wielandt} for some preliminary background on this topic. 

Note that a positive linear map on a matrix algebra $\Md$ is one that sends every positive semidefinite elements of $\Md$ to positive semidefinite elements. Also recall that for two Hermitian matrices $a,b$, the relation $a\geq b$ means that $a-b$ is a positive semidefinite matrix.   
\begin{definition}{\label{def-irr}}
A positive linear map $\Ep:\Md\rightarrow \Md$ is called irreducible if $\Ep(p)\leq \lambda p$, for any projection $p\in \Md$ implies that $p=0$ or $p=1$.
\end{definition}
Given a linear map $\Phi:\Md\rightarrow\Md$, the spectrum of $\Phi$ which is denoted by $\rm{Spec(\Phi)}$, is defined as
\[\rm{Spec(\Phi)}=\{\lambda\in \mathbb{C}: (\lambda .id-\Phi) \ \text{is not invertible on} \ \Md\},\]
where $id$ denotes the identity operator on $\Md$. Recall that the spectral radius of $\Phi$ which is denoted as $r(\Phi)$, is defined as
\[r(\Phi)=\sup\{|\lambda|:\lambda\in \rm{Spec(\Phi)}\}.\]
It follows that if $\Phi$ is a unital positive map, then $r(\Phi)\leq 1$ and hence all eigenvalues lie in the unit disc of the complex plane (see Proposition 6.1 in \cite{wolf}). For a unital positive map, the set of eigenvalues located in the unit circle is called the peripheral spectrum of $\Phi$.

\begin{definition}
A unital positive linear map $\Ep:\Md\rightarrow\Md$ is called primitive if it is irreducible and moreover, $\rm{Spec}(\Ep)\cap\mathbb{T}=\{1\}$. This means that the only peripheral spectrum of $\Ep$ is the identity element 1. 
\end{definition}
\begin{definition}
A positive linear map $\Ep:\Md\rightarrow\Md$ is called strictly positive if it sends every \emph{positive semideifinite} element in $\Md$ to a \emph{positive definite} element.
\end{definition}
Note that whether a completely positive map (maps of the form $x\mapsto \sum_{j}a_jxa_j^*$) on $\Md$ is strictly positive or not is an NP-hard problem \cite{NP-hard}. Since a completely positive map is positive as well, checking whether a positive map is strictly positive is also an NP-hard problem.

We briefly mention here that a finite power of a primitive map must become a strictly positive map. It follows from the Perron-Frobenius theory of positive maps (see \cite{evans-krohn}) that for a unital irreducible map $\Ep$ on $\Md$, 1 is a non-degenerate eigenvalue of $\Ep$ and the corresponding eigenvector (call it $\rho_0$) is positive definite. Hence we have $\rho_0=1_d$ and the definition of primitivity above implies that  $\lim_{k\to \infty}\Ep^k(\rho)=\tr(\rho)1_d:=P_{\infty}(\rho)$. Here $1_d$ is the identity matrix in $\Md$ and $P_\infty$ is an idempotent positive map. Now following Proposition 6.7 in \cite{wolf}, for any $k\in \mathbb{N}$ and $\rho\in \Md$ if $\Ep^k(\rho)$ has a vector $\psi$ such that $\Ep^k(\rho)\psi=0$, then we have
\[1=|\langle \Ep^k(\rho-1_d)\psi,\psi\rangle|\leq ||\Ep^k(\rho)-1_d||=||(\Ep^k-P_\infty)(\rho-1_d)||\leq \mu^k c||\rho-1_d||, \]
where $0<\mu<1$ and $0<c$ is a constant dependent of $\Ep$ but independent of $k$. Taking $k$ sufficiently large, the above inequality gives a contradiction for any $\rho\in \Md$. Hence there must exists a $k\in \mathbb{N}$ such that $\Ep^k(\rho)$ is positive definite for any positive semidefinite $\rho\in \Md$.

\begin{definition} {\rm[{see \cite{Wielandt}}]}
For a primitive positive map $\Ep:\Md\rightarrow\Md$, the {\rm\textbf{index of primitivity} (denoted by $\omega(\Ep)$)} is the least natural number $k$, such that $\Ep^k(a)$ is positive definite for every positive semidefinite $a\in \Md$.   
\end{definition}
The index of primitivity as defined above is a generalization of classical primitivity index of a non-negative primitive matrix.
\begin{theorem}{\rm[{see \cite{Wielandt}}]}
Let $\Ep:\Md\rightarrow\Md$ be a primitive quantum channel with $n$ linearly independent Kraus operators, that is, there are linearly independent elements $\{a_i:1\leq i\leq n\}$ such that 
\[\Ep(x)=\displaystyle\sum_{i=1}^n a_i x a_i^*.\]
Then \[\omega(\Ep)\leq (d^2-n+1)d^2.\]
\end{theorem}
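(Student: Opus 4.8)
The plan is to translate the analytic statement ``$\Ep^k$ is strictly positive'' into a purely algebraic spanning condition on products of the Kraus operators, and then to bound how quickly that condition is met by a dimension count. Write $A_I=a_{i_1}\cdots a_{i_k}$ for a multi-index $I=(i_1,\dots,i_k)$ of length $|I|=k$, so that $\Ep^k(x)=\sum_{|I|=k}A_I\,x\,A_I^*$. Applied to a rank-one projection $x=|\phi\rangle\langle\phi|$ this gives $\Ep^k(|\phi\rangle\langle\phi|)=\sum_{|I|=k}|A_I\phi\rangle\langle A_I\phi|$, which is positive definite exactly when $\{A_I\phi:|I|=k\}$ spans $\mathbb{C}^d$. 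Since every positive semidefinite matrix is a sum of such rank-one terms, I would first record the reduction: $\Ep^k$ is strictly positive if and only if $\{A_I\phi:|I|=k\}$ spans $\mathbb{C}^d$ for every nonzero $\phi$, and in particular it suffices that $S_k:=\mathrm{span}\{A_I:|I|=k\}$ be all of $\Md$. Thus $\omega(\Ep)$ is dominated by the least $k$ with $S_k=\Md$.

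Next I would control the growth of these spans by a filtration argument. Set $\mathcal{T}_q:=\mathrm{span}\{A_I:1\le|I|\le q\}$, giving an increasing chain $\mathcal{T}_1\subseteq\mathcal{T}_2\subseteq\cdots$ in $\Md$. The structural fact is the recursion $\mathcal{T}_{q+1}=\mathcal{T}_1+\mathcal{T}_1\cdot\mathcal{T}_q$, which shows that as soon as $\mathcal{T}_{q+1}=\mathcal{T}_q$ the chain stabilizes forever. Hence the dimensions increase strictly until they saturate; since the linear independence of the $n$ Kraus operators gives $\dim\mathcal{T}_1=n$ while $\dim\Md=d^2$, the chain reaches $\Md$ after at most $d^2-n+1$ steps. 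This is the source of the factor $(d^2-n+1)$ and the only place the hypothesis on the number of Kraus operators is used.

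The remaining, and most delicate, step is to pass from ``products of length at most $q_0$ span'' to ``products of length exactly $k$ span,'' i.e.\ from $\mathcal{T}_{q_0}=\Md$ with $q_0\le d^2-n+1$ to a single homogeneous component $S_k=\Md$. Here the primitivity hypothesis enters decisively: irreducibility together with $\mathrm{Spec}(\Ep)\cap\mathbb{T}=\{1\}$ is the noncommutative analogue of aperiodicity, and it is precisely what forbids a nontrivial $\mathbb{Z}_p$-grading of $\Md$ under which each $S_m$ would be confined to a proper graded piece, so that no single $S_k$ could exhaust $\Md$. Quantitatively, using unitality (available after passing to the adjoint map, which shares the relevant spanning property) one sees that a whole window of consecutive components $S_{j+1}+\cdots+S_{j+q_0}=\Md$ for every $j\ge1$; combining this with the fact that aperiodicity lets one realize the identity as length-homogeneous combinations at a coprime set of lengths, one can ``pad'' every short product up to a common length. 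I expect this padding to cost a factor at most $d^2=\dim\Md$, producing a single $k\le(d^2-n+1)d^2$ with $S_k=\Md$, and combining with the reduction above yields $\omega(\Ep)\le(d^2-n+1)d^2$.

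The main obstacle is exactly this last conversion. The filtration count is elementary, but turning the ``at most $q_0$'' spanning statement into an ``exactly $k$'' statement requires extracting quantitative aperiodicity from the spectral hypothesis, and it is there that the extra dimensional factor $d^2$—rather than a factor of order $d$ coming from the cyclic-subspace version of the argument for a fixed vector—must be justified.
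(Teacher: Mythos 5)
Your strategy is the same one the paper attributes to \cite{Wielandt}: reduce $\omega(\Ep)$ to the algebraic quantity $i(\Ep)=\min\{k:\mathrm{Span}\{a_{i_1}\cdots a_{i_k}\}=\Md\}$ and then bound $i(\Ep)$ by a dimension count (the paper itself only quotes the result and sketches exactly this route, so you are reconstructing the cited argument rather than diverging from it). Your first two steps are essentially correct and complete. The reduction $\omega(\Ep)\le i(\Ep)$ via rank-one projections is right, and the filtration $\mathcal{T}_1\subseteq\mathcal{T}_2\subseteq\cdots$ with the recursion $\mathcal{T}_{q+1}=\mathcal{T}_1+\mathcal{T}_1\cdot\mathcal{T}_q$ does stabilize within $d^2-n+1$ steps; the one thing you should make explicit there is why the stable value is all of $\Md$ rather than a proper non-unital subalgebra --- this is where irreducibility enters, via the absence of a common invariant subspace for the $a_i$ together with Burnside's theorem.

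The genuine gap is the homogenization step, and you have located it yourself: the passage from $\mathcal{T}_{q_0}=\Md$ (products of length \emph{at most} $q_0$ span) to $S_k=\Md$ for a \emph{single} $k\le(d^2-n+1)d^2$ is asserted (``I expect this padding to cost a factor at most $d^2$'') rather than proved, and it is exactly the step where primitivity, as opposed to mere irreducibility, must be used quantitatively. That it cannot be waved through is shown by the irreducible but non-primitive channel on $\mathcal{M}_2$ with Kraus operators the matrix units $e_{12},e_{21}$: one checks that $\Ep(e_{11})=e_{22}$ and $\Ep(e_{22})=e_{11}$, so $-1$ is in the peripheral spectrum, and the homogeneous spans alternate, $S_1=S_3=\mathrm{span}\{e_{12},e_{21}\}$ and $S_2=S_4=\mathrm{span}\{e_{11},e_{22}\}$, so that $\mathcal{T}_2=\mathcal{M}_2$ while no single $S_k$ ever equals $\mathcal{M}_2$. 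Hence the filtration count alone (which uses only irreducibility) can never yield the homogeneous statement, and your appeal to ``aperiodicity lets one realize the identity as length-homogeneous combinations at a coprime set of lengths'' is the entire content of the theorem's hard part: one needs a concrete mechanism (in \cite{Wielandt} this is an additional lemma producing a second increasing chain of homogeneous components of length at most $d^2$) before the factor $d^2$ is earned. As written, your proposal establishes $\omega(\Ep)\le i(\Ep)$ and $\mathcal{T}_{d^2-n+1}=\Md$, but not the stated bound.
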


The above bound is called the quantum version of the \textbf{Wielandt bound} which is a generalization of the classical Wielandt number of primitive matrices. 
The bound $(d^2-n+1)d^2$ was obtained by looking at the following quantity:
\[i(\Ep)=\min\{k\in \mathbb{N}: \rm{Span}\{a_{i_1}\cdots a_{i_k}\}=\Md\},\]
which can be viewed as the minimum number $k$ for which the 
the Choi matrix of $\Ep^k$ has full rank, that is, $\rm{rank}(C_{\Ep^k})=d^2$. It was proved in \cite{Wielandt} that $\omega(\Ep)\leq i(\Ep)$ and then it was proved that $i(\Ep)\leq (d^2-n+1)d^2 $. In this context, it is worth pointing out again the work of Micha{\l}ek and Shitov \cite{Mat-Shitov} who used algebraic techniques to improve this bound from $O(d^4)$ to $O(d^2\log d)$. This new bound of course applies to primitive quantum channels and more generally to primitive completely positive maps.

For finding a bound of the primitive index for positive maps we can not follow the above procedure because our maps do not admit any Kraus decomposition and hence a very different approach must be taken to achieve this goal. We discuss this approach in the following section.
 

\section{A new Wielandt bound for positive maps}
We begin with generalizing the concept of irreducibility of a linear map. Note that two projections $p,q\in \Md$ are said to be (Murray-von Neumann) equivalent (written as $p\sim q$) if there is an operator $v\in \Md$ such that $vv^*=p$ and $v^*v=q$. It follows that $p\sim q$ if and only if $\tr(p)=\tr(q)$.
\begin{definition} A positive linear map $\Ep:\Md\rightarrow\Md$ is defined to be 
\textbf{fully irreducible} if $\Ep(p)\leq \lambda q$, for two projections $p,q$ with $p\sim q$ and $\lambda>0$ implies $p,q\in \{0,1\}$.
\end{definition}
Note that the above definition arose in \cite{idel} and these maps were called fully indecomposable however we are avoiding this terminology because there is a concept of indecomposibility in the theory of positive maps. The above definition clearly generalizes the irreducibility (see Definition \ref{def-irr}) so we will call such maps \emph{fully irreducible}. It is evident that fully irreducibility is stronger than irreducibility as in the later case, trivially one can put $q=p$ and vacuously $p\sim p$. We begin with a lemma.
\begin{lemma}For a positive element $a\in \Md$, if for a projection $p$ we have $pap=0$, then $(1-p)ap=0=pa(1-p)$.
\end{lemma}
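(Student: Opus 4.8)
The plan is to exploit the positivity of $a$ through its positive square root $a^{1/2}$, which exists in $\Md$ since $a\geq 0$. The guiding observation is that the hypothesis $pap=0$ can be rewritten as the vanishing of an expression of the form $c^*c$, and in a C*-algebra such a vanishing forces $c$ itself to vanish.

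First I would set $c=a^{1/2}p$. Then $c^*c = p\,a^{1/2}a^{1/2}\,p = pap = 0$. Using the C*-identity $\|c\|^2=\|c^*c\|$, the equality $c^*c=0$ gives $c=0$; hence $a^{1/2}p=0$, and taking adjoints also yields $p\,a^{1/2}=0$. From here the conclusion is immediate: since $ap = a^{1/2}(a^{1/2}p)=0$ and $pa = (p\,a^{1/2})a^{1/2}=0$, the element $a$ in fact annihilates $p$ on both sides. Consequently $(1-p)ap=(1-p)(ap)=0$ and $pa(1-p)=(pa)(1-p)=0$, which is exactly what is claimed (indeed, slightly more).

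I do not anticipate any genuine obstacle here: the whole argument reduces to the single elementary fact that $c^*c=0$ forces $c=0$, combined with the existence of the positive square root. As an alternative route, one may write $a$ in block form with respect to the decomposition $1=p+(1-p)$ and invoke the standard fact that a positive $2\times 2$ operator matrix with a vanishing diagonal block must have vanishing off-diagonal blocks; testing positivity against suitably scaled vectors reproduces the same conclusion. Either presentation is short, so the only real choice is which of the two to record.
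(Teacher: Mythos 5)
Your argument is correct, and it in fact establishes the stronger conclusion $ap=pa=0$, of which the stated identities are immediate consequences. Your route is genuinely different from the paper's: the paper never introduces the square root, but instead tests positivity of $a$ against vectors $\lambda\xi+\eta$ with $\xi\in\mathrm{Range}(p)$, $\eta\in\mathrm{Range}(p)^{\perp}$, and lets $\lambda$ range over $\mathbb{C}$ to force $\langle a\xi,\eta\rangle=0$ --- exactly the ``alternative route'' you sketch at the end (the vanishing-diagonal-block argument for a positive $2\times 2$ operator matrix). What your main argument buys is brevity and a sharper statement: from $c^*c=0$ with $c=a^{1/2}p$ you get $a^{1/2}p=0$, hence $ap=pa=0$, with no computation of quadratic forms; the cost is invoking the existence of the positive square root and the C*-identity $\|c\|^2=\|c^*c\|$. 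The paper's version is more elementary in that it uses only the definition of positive semidefiniteness as nonnegativity of $\langle a v,v\rangle$, which is perhaps why the author chose it, but either proof is complete and correct.
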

\begin{proof}
If $\xi\in \rm{Range(p)}$ and $\eta\in \rm{Range(p)}^\perp$, then $p\xi=\xi$ and $p\eta=0$. By positivity of $a$, we have for every $\lambda\in \mathbb{C}$
,
\[\langle a(\lambda \xi+\eta),\lambda \xi +\eta
\rangle\geq 0.\]
Now using $pap=0$, we get from the above inequality $\langle a\eta, \eta\rangle + 2\rm{Re}(\lambda\langle a \xi,\eta\rangle)\geq 0$. This implies $\langle a\xi,\eta\rangle =0$. This yields
\[\langle (1-p)ap (\xi+\eta), \xi+\eta\rangle=\langle ap(\xi+\eta), \xi+\eta \rangle =0.\]
Similarly $pa(1-p)=0$.
\end{proof}

\begin{proposition}\label{prop-full-irr}
For a unital positive map $\Ep$, if $\Ep(p)\leq \lambda q$,  for some $\lambda>0$ and projections $p,q$, then we have $\Ep(p)\leq q$. 
\end{proposition}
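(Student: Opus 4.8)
The plan is to show that the positive element $\Ep(p)$ is entirely supported in the range of $q$, i.e.\ that $\Ep(p)=q\Ep(p)q$, and then to exploit unitality of $\Ep$ to pin down its size inside that corner. The hypothesis $\Ep(p)\leq\lambda q$ really only constrains the \emph{location} of the support of $\Ep(p)$; the scalar $\lambda$ plays no essential role and should drop out, which is why one can hope for the conclusion $\Ep(p)\leq q$ with the sharp constant $1$.

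First I would record that $\Ep(p)\geq 0$, since $p\geq 0$ and $\Ep$ is positive. Compressing the inequality $0\leq\Ep(p)\leq\lambda q$ by the projection $1-q$ on both sides gives $0\leq (1-q)\Ep(p)(1-q)\leq\lambda(1-q)q(1-q)=0$, so $(1-q)\Ep(p)(1-q)=0$. At this point I would invoke the preceding Lemma with the positive element $a=\Ep(p)$ and with the projection $1-q$ playing the role of the projection in its statement: from $(1-q)\Ep(p)(1-q)=0$ it yields $q\Ep(p)(1-q)=0=(1-q)\Ep(p)q$. Expanding $\Ep(p)=\big(q+(1-q)\big)\Ep(p)\big(q+(1-q)\big)$ and discarding the three vanishing blocks leaves exactly $\Ep(p)=q\Ep(p)q$.

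Second, unitality enters: since $p\leq 1_d$ and $\Ep$ is positive, $\Ep(p)\leq\Ep(1_d)=1_d$. Now for an arbitrary vector $\xi$, I would use the corner identity to write $\langle\Ep(p)\xi,\xi\rangle=\langle\Ep(p)q\xi,q\xi\rangle$, and then the bound $\Ep(p)\leq 1_d$ to estimate $\langle\Ep(p)q\xi,q\xi\rangle\leq\|q\xi\|^2=\langle q\xi,\xi\rangle$. As $\xi$ was arbitrary, this is precisely $\Ep(p)\leq q$.

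The argument is short, and the only genuinely load-bearing step is the first one: recognizing that the order relation $\Ep(p)\leq\lambda q$ forces $\Ep(p)$ into the corner $q\Md q$, and that the already-established Lemma is exactly the tool that kills the off-diagonal blocks. Once the support statement $\Ep(p)=q\Ep(p)q$ is in hand, combining it with unitality is routine. I do not anticipate any analytic difficulty, since everything takes place in finite dimensions and uses only the quadratic-form characterization of the matrix order.
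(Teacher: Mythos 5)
Your proof is correct and follows essentially the same route as the paper: compress the inequality by $1-q$ to kill the $(1-q)$-corner, invoke the preceding lemma to kill the off-diagonal blocks, and conclude $\Ep(p)=q\Ep(p)q$. The only (harmless) difference is the final step, where you use $\Ep(p)\leq\Ep(1)=1$ together with a quadratic-form estimate, whereas the paper appeals to $\|\Ep\|=1$ and the bound $q\Ep(p)q\leq\|\Ep(p)\|q$.
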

\begin{proof}
We first note that $\Ep(p)\leq \lambda q$, implies that 
$a\Ep(p)a^*\leq a(\lambda q)a^*$, for any $a\in \Md$. Using $a=1-q$, we get
\[
0\leq (1-q)\Ep(p)(1-q)\leq (1-q)(\lambda q)(1-q)=0.\]
Hence we obtain
\begin{equation}\label{eq-1}
(1-q)\Ep(p)(1-q)=0
\end{equation}
Now by the previous lemma we have 
\[q\Ep(p)(1-q)=0=(1-q)\Ep(p)q.\]
These equations result in \[q\Ep(p)q=q\Ep(p)=\Ep(p)q.\]
Now expanding the Equation \ref{eq-1} we get
\[0=(1-q)\Ep(p)(1-q)=\Ep(p)-\Ep(p)q-q\Ep(p)+q\Ep(p)q=\Ep(p)-q\Ep(p)q.\]
Now using the unitality of $\Ep$ we know that $\|\Ep\|=1$ and we obtain  
\[\Ep(p)=q\Ep(p)q\leq q \|\Ep(p)\|1q\leq q.\] 
\end{proof}
\begin{remark}
Intuitively, the above proposition provides a relaxation in finding a projection $p$ in the set $F=\{a\in \mathcal{M}_{d+}: \Ep(a)\leq \lambda q\}$, where $q$ is a given projection and $\mathcal{M}_{d+}$ is the set of all positive semidefinite elements of $\Md$. Indeed, if we let $F_1=F\bigcap\{x\in \Md: ||x||\leq 1\}$, then the above Proposition says that if $p\in F$, then $p\in F_1$. Notice that the set $F$ is a face of the convex set $\mathcal{M}_{d+}$, that is, if $a\geq 0$ and $b\in F$, then $a\leq b$ implies that $a\in F$. This relaxation holds because a unital positive map maps the unit ball 
$\mathbf{B}=\{x\in \Md: ||x||\leq 1\}$ to itself.
\end{remark}
\begin{proposition}\label{prop-proj-eqi}
For a unital trace preserving map $\Ep:\Md\rightarrow \Md$, $\Ep$ is fully irreduicble if and only if there are  no nontrivial projections $p,q$ with $p\sim q$ such that $\Ep(p)=q$.
\end{proposition}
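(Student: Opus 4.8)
The plan is to prove both implications, with essentially all of the work concentrated in the converse direction. Throughout I will exploit that $\Ep$ is both unital and trace preserving, together with the fact recorded just above the statement that $p\sim q$ is equivalent to $\tr(p)=\tr(q)$.

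The forward implication is nearly immediate. Assuming $\Ep$ is fully irreducible, suppose toward a contradiction that there exist nontrivial projections $p\sim q$ with $\Ep(p)=q$. Then trivially $\Ep(p)=q\leq 1\cdot q$, so the defining condition of full irreducibility applies with $\lambda=1$, forcing $p,q\in\{0,1\}$ and contradicting nontriviality. Hence no such pair can exist.

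For the converse, assume there are no nontrivial equivalent projections $p\sim q$ with $\Ep(p)=q$, and let $\Ep(p)\leq\lambda q$ hold for projections $p\sim q$ and some $\lambda>0$; I must deduce $p,q\in\{0,1\}$. First I would apply Proposition \ref{prop-full-irr} (which relies on the unitality of $\Ep$) to upgrade the inequality $\Ep(p)\leq\lambda q$ to $\Ep(p)\leq q$. Next, taking traces and using that $\Ep$ is trace preserving together with $\tr(p)=\tr(q)$ (from $p\sim q$), I obtain
\[
\tr\bigl(q-\Ep(p)\bigr)=\tr(q)-\tr(p)=0.
\]
Since $q-\Ep(p)\geq 0$ is positive semidefinite with vanishing trace, it must be the zero matrix, whence $\Ep(p)=q$. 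The standing hypothesis then forces $p$ and $q$ to be trivial, which is precisely full irreducibility.

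The crux of the argument is the converse, and more specifically the passage from the inequality $\Ep(p)\leq\lambda q$ to the equality $\Ep(p)=q$. This is where all three hypotheses are used in concert: unitality (through Proposition \ref{prop-full-irr}) lets me discard the scalar $\lambda$, trace preservation controls $\tr(\Ep(p))$, and the equivalence $p\sim q$ supplies $\tr(p)=\tr(q)$; the final collapse to equality rests on the elementary but essential observation that a positive semidefinite matrix of trace zero must vanish. I do not anticipate any serious obstacle once these ingredients are assembled, since each is either a cited proposition or a standard linear-algebraic fact.
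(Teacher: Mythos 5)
Your proof is correct and follows essentially the same route as the paper's: both directions hinge on upgrading $\Ep(p)\leq\lambda q$ to $\Ep(p)\leq q$ via Proposition \ref{prop-full-irr} and then invoking trace preservation together with faithfulness of the trace (your explicit observation that a positive semidefinite matrix with zero trace vanishes) to get equality. The only cosmetic difference is that the paper phrases the second implication contrapositively while you argue it directly.
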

\begin{proof}
The `if' part is obvious as $\Ep(p)=q$ clearly violates the definition of fully irreducibility. Conversely, suppose $\Ep$ is not fully irreducible. Then there are projections $p,q$ and $p\sim q$ such that $\Ep(p)\leq \lambda q$ which by proposition \ref{prop-full-irr} we have $\Ep(p)\leq q$. Now using the trace preservation of $\Ep$ and the faithfulness of trace, we get $\Ep(p)=q$.
\end{proof}

We note down an observation here that every unital and trace preserving positive map is rank increasing.
\begin{proposition}\label{rank-increase}
Let $\Ep:\Md\rightarrow \Md$ a unital and trace preserving positive map. Then if $a$ is a positive element in $\Md$, then it follows that
\[\rm{Rank}(\Ep(a))\geq \rm{Rank}(a).\] 
\end{proposition}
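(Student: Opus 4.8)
The plan is to compare $\Ep(a)$ not with $a$ directly but with the support projection of $a$, and then to pit the unitality of $\Ep$ against its trace preservation. Write $r=\mathrm{Rank}(a)$ and let $p$ be the support projection of $a$, so that $\mathrm{Rank}(p)=r$ and, since $a$ is positive, $a\ge \epsilon\, p$ where $\epsilon>0$ is the smallest nonzero eigenvalue of $a$. By positivity of $\Ep$ this gives $\Ep(a)\ge \epsilon\,\Ep(p)\ge 0$, so it suffices to control the rank of $\Ep(p)$ and relate it to that of $\Ep(a)$.

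Next I would locate the kernel of $\Ep(a)$. Let $q$ be the support projection of $\Ep(a)$ and put $q'=1-q$, so that $\Ep(a)q'=0$ and in particular $q'\Ep(a)q'=0$. Sandwiching the inequality $\Ep(a)\ge \epsilon\,\Ep(p)$ between copies of $q'$ yields $0\le \epsilon\, q'\Ep(p)q'\le q'\Ep(a)q'=0$, hence $q'\Ep(p)q'=0$. Since $\Ep(p)$ is positive, the Lemma above applies with the projection $q'$ and forces $\Ep(p)q'=0$; equivalently, the range of $\Ep(p)$ is contained in the range of $q$, so $\mathrm{Rank}(\Ep(p))\le \mathrm{Rank}(q)=\mathrm{Rank}(\Ep(a))$.

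Finally I would close the loop with the two structural hypotheses. Unitality gives $\|\Ep\|=1$, hence $\|\Ep(p)\|\le 1$, so every nonzero eigenvalue of the positive matrix $\Ep(p)$ is at most $1$ and therefore $\tr(\Ep(p))\le \mathrm{Rank}(\Ep(p))$. Trace preservation gives $\tr(\Ep(p))=\tr(p)=r$. Chaining these, $r=\tr(\Ep(p))\le \mathrm{Rank}(\Ep(p))\le \mathrm{Rank}(\Ep(a))$, which is exactly the claim $\mathrm{Rank}(\Ep(a))\ge \mathrm{Rank}(a)$. The one step that requires care — the main obstacle — is passing from $q'\Ep(p)q'=0$ to the stronger $\Ep(p)q'=0$: this is where positivity is essential (a merely self-adjoint matrix can have a vanishing compression without annihilating $q'$), and it is precisely the content of the Lemma already established, so the argument hinges on invoking that correctly rather than on any fresh computation.
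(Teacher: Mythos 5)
Your proof is correct, but it takes a genuinely more elementary route than the paper. The paper's proof invokes Uhlmann's theorem, namely that $b=\Ep(a)$ for a unital trace-preserving positive map forces $\lambda(b)\prec\lambda(a)$, and then reads off the rank inequality for projections from the majorization relation; the reduction from general positive $a$ to its support projection is asserted in one line without detail. You avoid Uhlmann entirely and instead extract exactly the two consequences of majorization that are needed: unitality gives $0\le \Ep(p)\le \Ep(1)=1$, so every eigenvalue of $\Ep(p)$ is at most $1$ and hence $\tr(\Ep(p))\le \mathrm{Rank}(\Ep(p))$, while trace preservation pins $\tr(\Ep(p))=\tr(p)=\mathrm{Rank}(p)$. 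You also make the reduction to the support projection fully explicit: from $a\ge\epsilon p$ you get $\Ep(a)\ge\epsilon\,\Ep(p)$, compress by the kernel projection $q'$ of $\Ep(a)$, and correctly invoke the paper's Lemma on positive elements to upgrade $q'\Ep(p)q'=0$ to $\Ep(p)q'=0$, whence $\mathrm{Rank}(\Ep(p))\le\mathrm{Rank}(\Ep(a))$. (This last step could be shortened to the standard fact that $0\le B\le A$ implies $\ker A\subseteq\ker B$, but your version is sound.) What each approach buys: the paper's is shorter given Uhlmann's theorem as a black box; yours is self-contained, fills in the projection-reduction step the paper leaves implicit, and makes transparent precisely where unitality and trace preservation each enter.
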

\begin{proof}
It is a consequence of Uhlmann's theorem (see Theorem 4.33 in \cite{Watrous-book}) that $b=\Phi(a)$, for a unital and trace preserving positive map $\Phi$ if and only if 
\[\lambda(b)\prec \lambda(a),\]
that is the vector of eigenvealues of $a$ majorizes the vector of eigenvalues of $\Phi(a)$.

Now it is enough to prove the proposition for projections.
If $p$ is a projection, then by Uhlmann's theorem 
$\lambda(\Ep(p))\prec \lambda(p)$. Then it follows that
\[\rm{Rank}(\Ep(p))\geq \rm{Rank}(p).\]      
\end{proof}
A unital positive map $\Phi:\Md\rightarrow\Md$ satisfies the following inequality (see \cite{stormer}, Theorem 1.3.1)
\[\Phi(aa^*)\geq \Phi(a)\Phi(a^*),\]
for all elements $a$ satisfying $aa^*=a^*a$. Using the Schwarz inequality for positive maps on hermitian elements, we can derive a stronger result for fully irreducible maps. 
The following result first appeared as Proposition 1.25 in \cite{idel}. Here we give a different proof. By singular positive elements we mean positive elements which have at least one zero eigenvalue.
\begin{theorem}\label{thm-strict-kernel}
Let $\Ep:\Md\rightarrow\Md$ be a unital and trace preserving positive map. Then $\Ep$ is fully irreducible if and only if for all singular positive elements $a\in \Md$, we have \[\rm{Rank} \ (\Ep(a))> \rm{Rank}(a),\] that is, $\Ep$ is fully irreducible if and only if it is strictly rank increasing or equivalently it is strictly kernel reducing.
\end{theorem}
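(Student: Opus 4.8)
The plan is to reduce both implications to the case of projections and then to pit the rank estimate of Proposition~\ref{rank-increase} against the definition of full irreducibility. I may assume the singular positive element $a$ is nonzero, since for $a=0$ there is nothing to prove; thus its support projection is nontrivial.

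The key reduction is that $\mathrm{Rank}(\Ep(a))$ depends only on the support projection $p$ of $a$ (the projection onto $\mathrm{range}(a)$). Writing $c>0$ and $C>0$ for the smallest nonzero and the largest eigenvalue of $a$, we have $c\,p\le a\le C\,p$, and applying the positive map $\Ep$ gives $c\,\Ep(p)\le\Ep(a)\le C\,\Ep(p)$. Since $c,C>0$, the positive operators $\Ep(a)$ and $\Ep(p)$ then have the same kernel, hence the same range, so $\mathrm{Rank}(\Ep(a))=\mathrm{Rank}(\Ep(p))$ while trivially $\mathrm{Rank}(a)=\mathrm{Rank}(p)$. (Alternatively, the Schwarz inequality applied to the Hermitian square root $a^{1/2}$ gives $\Ep(a)\ge\Ep(a^{1/2})^2$, yielding the same range comparison.) This step, making the output rank a function of the support alone, is the one place where positivity is genuinely exploited and is where I expect the real work to lie; everything afterwards is bookkeeping with ranks and Murray--von Neumann equivalence. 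It reduces the theorem to proving, for every nontrivial projection $p$, that $\Ep$ is fully irreducible if and only if $\mathrm{Rank}(\Ep(p))>\mathrm{Rank}(p)$.

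For the forward direction, assume $\Ep$ is fully irreducible and let $p$ be a nontrivial projection of rank $r$, $0<r<d$. By Proposition~\ref{rank-increase} we have $\mathrm{Rank}(\Ep(p))\ge r$, so strict increase can fail only if $\mathrm{Rank}(\Ep(p))=r$. Assume this and let $q$ be the support projection of $\Ep(p)$; then $\mathrm{Rank}(q)=r$, so $p\sim q$ and both are nontrivial. As $\Ep$ is unital, $\|\Ep(p)\|\le 1$, whence $\Ep(p)\le\|\Ep(p)\|\,q\le q$. This produces nontrivial projections $p\sim q$ with $\Ep(p)\le q$, contradicting full irreducibility; hence $\mathrm{Rank}(\Ep(p))>r$.

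For the converse, assume $\Ep$ is strictly rank increasing on nonzero singular positive elements and suppose it is not fully irreducible. Then there are nontrivial projections $p\sim q$ and $\lambda>0$ with $\Ep(p)\le\lambda q$, which by Proposition~\ref{prop-full-irr} improves to $\Ep(p)\le q$; consequently $\mathrm{Rank}(\Ep(p))\le\mathrm{Rank}(q)=\mathrm{Rank}(p)$. But $p$ is itself a nonzero singular positive element, so the hypothesis forces $\mathrm{Rank}(\Ep(p))>\mathrm{Rank}(p)$, a contradiction. (One may instead quote Proposition~\ref{prop-proj-eqi} to obtain $\Ep(p)=q$ outright.) Therefore $\Ep$ is fully irreducible, completing the equivalence.
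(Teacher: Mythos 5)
Your proof is correct, but the forward direction travels a genuinely different road from the paper's. The paper gets the key inequality $q\ge \Ep(p)$ (with $p,q$ the support projections of $a$ and $\Ep(a)$) by applying the Kadison--Schwarz inequality for normal elements to $a^{1/2}$, iterating via operator monotonicity of the square root to obtain $\Ep(a)^{1/2^n}\ge \Ep(a^{1/2^n})$, and passing to the limit $n\to\infty$ with the spectral theorem. You instead sandwich $c\,p\le a\le C\,p$ with $c,C$ the extreme nonzero eigenvalues, apply $\Ep$, and read off $\ker \Ep(a)=\ker \Ep(p)$; combined with Proposition~\ref{rank-increase} and the elementary bound $\Ep(p)\le\|\Ep(p)\|\,q\le q$ (where $q$ is the support of $\Ep(p)$), this yields the same contradiction with full irreducibility. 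Your route is more elementary: it never invokes the Schwarz-type inequality for positive maps, only positivity, unitality, and the Uhlmann majorization already packaged in Proposition~\ref{rank-increase}, and it isolates cleanly the fact that the rank of $\Ep(a)$ depends only on the support of $a$. What the paper's limit argument buys is a direct, self-contained derivation of $q\ge\Ep(p)$ without passing through the two-sided eigenvalue bound, and it foreshadows the Schwarz-map hypotheses used elsewhere in the article. One small point worth noting: the paper's wording leaves implicit that the case $\mathrm{Rank}(\Ep(a))<\mathrm{Rank}(a)$ must be excluded by Proposition~\ref{rank-increase} before ``equal rank'' becomes the only failure mode of strict increase; you make that step explicit, which is a minor improvement in exposition. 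Your converse direction coincides with the paper's.
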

\begin{proof}
A unital positive linear map satisfies the Schwarz inequality on normal elements. So using this inequality on a positive element $a^{1/2}$ one gets 

\[\Ep(a)=\Ep(a^{1/2}a^{1/2})\geq \Ep(a^{1/2})\Ep(a^{1/2}).\]
As $x\mapsto \sqrt{x}$ is an operator monotone function we get $\Ep(a)^{1/2}\geq \Ep(a^{1/2})$. We can continue this process to get 
\[\Ep(a)^{1/2^n}\geq \Ep(a^{1/2^n}), \ \forall n.\]
If $q$ is the projection onto the $\rm{Range}(\Ep(a))$ and $p$ is the projection onto $\rm{Range}(a)$ then note that as $a$ is singular, $p$ is non-trivial. Now by the spectral theorem for positive elements, taking limit as $n\to\infty$ in the above equation we obtain
\[q\geq \Ep(p).\]
Now if $a$ and $\Ep(a)$ have the same rank, then $p\sim q$ and this violates the fully irreducibilty property. 

Conversely, suppose there are non-trivial projections $p,q$ with $p\sim q$ such that $\Ep(p)\leq \lambda q$, for $\lambda>0$. Using Proposition \ref{prop-full-irr} we have $\Ep(p)\leq q$. 
Then $q-\Ep(p)\geq 0$. Using the trace preservation property of $\Ep$ and faithfulness of trace we get $q=\Ep(p)$. Since \[\rm{rank} (p)=Tr(p)=Tr(\Ep(p))=Tr(q)=\rm{rank}(q),\]
it violates the (strictly) rank increasing property.
\end{proof}
Now we introduce one more concept related to a linear map acting on $\Md$.
\begin{definition}
The multiplicative domain $\mathcal{M}_\Phi$ of a linear map $\Phi:\Md\rightarrow\Md$ is the following set:
\[\mathcal{M}_\Phi=\{a\in \Md: \Phi(ab)=\Phi(a)\Phi(b), \Phi(ba)=\Phi(b)\Phi(a) \ \forall b\in \Md \}.\]
\end{definition}
\begin{definition}
We say a positive linear map $\Phi:\Md\rightarrow\Md$ is a Schwarz map if it satisfies the Schwarz inequality $\Phi(aa^*)\geq \Phi(a)\Phi(a^*)$, for every element $a\in \Md$.
\end{definition}
It is a consequence of the Stinespring dilation theorem for completely positive maps that every unital completely positive map acting on a C$^*$-algebra satisfies the Schwarz inequality. However, following \cite{choi1980b}, the map $\Phi:\mathcal{M}_2\rightarrow \mathcal{M}_2$  
defined by
\[\Phi\Big(\begin{bmatrix} 
x_{11} & x_{12}\\
x_{21} & x_{22}
\end{bmatrix}\Big)=\frac{1}{2}\begin{bmatrix}
x_{11}+\frac{x_{11}+x_{22}}{2} & x_{21}\\
x_{12} & x_{22}+\frac{x_{11}+x_{22}}{2}
\end{bmatrix},\]
is a Schwarz map but fails to be 2-positive.
For a Schwarz map $\Phi$, the set $\mathcal{M}_\Phi$ is a C$^*$-subalgebra of $\Md$ (see Corollary 2.1.6 in \cite{stormer}). Following \cite{miza}, given a Schwarz map $\Phi$ on $\Md$, one obtains a decreasing chain of C$^*$-subalgebras 
\[\mathcal{M}_{\Phi}\supseteq\mathcal{M}_{\Phi^2}\supseteq\cdots
\supseteq\mathcal{M}_{\Phi^n}\supseteq \cdots.\]
For finite dimensionality, the above chain stabilizes to the subalgebra
\[\mathcal{M}_{\Phi^{\infty}}=\bigcap_{n\geq 1}\mathcal{M}_{\Phi^n}.\]
The minimum number $n$ required for the channel to reach to this subalgebra $\mathcal{M}_{\Phi^{\infty}}$ is called the multiplicative index and denoted by $\kappa(\Phi)$.
\begin{remark}
It should be noted here that the stabilized multiplicative domain $(\mathcal{M}_{\Phi^{\infty}})$ and the multiplicative index $(\kappa(\Phi))$ of a linear map $\Phi$ can be defined as long as the map is a Schwarz map. Indeed the map need not be a channel as these concepts originated (\cite{miza}) exploiting only the Schwarz inequality and trace preservation property of $\Phi$.
\end{remark}
\begin{proposition}\label{fully-irrred-trivial-mult.}
 A trace preserving Schwarz map $\Ep:\Md\rightarrow\Md$ is fully irreducible if and only if it has trivial multiplicative domain, that is, $\Me=\mathbb{C}1$. 
\end{proposition}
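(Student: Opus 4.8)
The plan is to prove both implications by routing them through the projection criterion of Proposition \ref{prop-proj-eqi}, which says that $\Ep$ fails to be fully irreducible exactly when $\Ep(p)=q$ for a pair of nontrivial equivalent projections $p\sim q$. The bridge to the multiplicative domain is the principle that, for a Schwarz map, a self-adjoint element lies in $\Me$ precisely when it saturates the Schwarz inequality, and that a projection in $\Me$ is sent to a projection. Before anything else I would record that unitality is automatic in this setting: applying the Schwarz inequality to $1$ gives $\Ep(1)\geq \Ep(1)^2\geq 0$, so $0\leq \Ep(1)\leq 1$, while trace preservation forces $\tr(\Ep(1))=\tr(1)=d$; a positive contraction of full trace must equal $1$, so $\Ep(1)=1$. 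This is what lets me invoke the earlier (unital, trace preserving) propositions, and it guarantees $1\in\Me$ so that $\Me$ is a unital C$^*$-subalgebra of $\Md$.

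For the implication ``$\Me=\mathbb{C}1\Rightarrow$ fully irreducible'' I would argue the contrapositive. If $\Me\neq\mathbb{C}1$, then as a unital C$^*$-subalgebra of $\Md$ of dimension at least two it contains a nontrivial projection $p$. Since $p\in\Me$, we get $\Ep(p)=\Ep(p^2)=\Ep(p)^2$, and $\Ep(p)$ is self-adjoint by positivity, so $q:=\Ep(p)$ is itself a projection. Trace preservation gives $\tr(q)=\tr(p)$, hence $p\sim q$, and $q$ is nontrivial because $\tr(q)=\tr(p)\notin\{0,d\}$. Thus $\Ep(p)=q$ with $p\sim q$ nontrivial, and Proposition \ref{prop-proj-eqi} shows $\Ep$ is not fully irreducible.

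For the converse, suppose $\Ep$ is not fully irreducible. By Proposition \ref{prop-proj-eqi} there are nontrivial projections $p\sim q$ with $\Ep(p)=q$, so that $\Ep(p^*p)=\Ep(p)=q=q^2=\Ep(p)^*\Ep(p)$, i.e. the Schwarz inequality is saturated at $p$ (and, since $p=p^*$, at $p^*$ as well). It remains to upgrade this saturation to membership in $\Me$, and I would do so through a general lemma: for a unital Schwarz map, $\Ep(a^*a)=\Ep(a)^*\Ep(a)$ forces $\Ep(a^*b)=\Ep(a)^*\Ep(b)$ for every $b$. The proof is a polarization. Feeding $a+\lambda b$ into the Schwarz inequality and cancelling the saturated $|\lambda|^0$ term leaves $\bar\lambda D^*+\lambda D+|\lambda|^2 C\geq 0$ for all $\lambda\in\mathbb{C}$, where $D=\Ep(a^*b)-\Ep(a)^*\Ep(b)$ and $C=\Ep(b^*b)-\Ep(b)^*\Ep(b)\geq 0$. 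Taking $\lambda$ real and letting $\lambda\to 0$ from both sides forces $D+D^*=0$, and taking $\lambda$ purely imaginary forces $D-D^*=0$; together these give $D=0$. Applying this with $a=p$ gives $\Ep(pb)=q\Ep(b)$, and taking adjoints in $\Ep(pb^*)=q\Ep(b^*)$ gives $\Ep(bp)=\Ep(b)q$; hence $p\in\Me$ and $\Me\neq\mathbb{C}1$.

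The hard part will be exactly this last polarization lemma. For completely positive or merely $2$-positive maps one gets it immediately from the operator Cauchy--Schwarz inequality encoded in positivity of a $2\times 2$ block matrix, but for a map that is only assumed to satisfy the Schwarz inequality no such block positivity is available, so the passage from saturation to full multiplicativity must be extracted by hand from the one-variable inequality as above. Everything else in the argument reduces cleanly to Proposition \ref{prop-proj-eqi} together with trace preservation and the automatic unitality noted at the start.
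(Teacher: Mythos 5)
Your proof is correct and follows essentially the same route as the paper: both directions are funneled through Proposition \ref{prop-proj-eqi}, using that a Schwarz map sends projections of $\Me$ to projections and that trace preservation turns $\Ep(p)=q$ into $p\sim q$. The only differences are matters of detail: the paper gets unitality via Russo--Dye where you use the eigenvalue bound from $\Ep(1)\geq\Ep(1)^2$, and the paper dismisses the converse with ``follows in the similar way'' (implicitly resting on St{\o}rmer's characterization of $\Me$ by saturation of the Schwarz inequality), whereas you prove that saturation-implies-multiplicativity polarization lemma by hand --- correctly.
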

\begin{proof}
First, we note that a trace preserving Schwarz map is unital. Indeed, $\Ep(xx^*)\geq \Ep(x)\Ep(x^*)$ implies \[\|\Ep(x)\|^2=||\Ep(x)\Ep(x^*)||\leq ||\Ep(xx^*)||\leq \|\Ep\|\|x\|^2.\] Using the Russo-dye  theorem (Corollary 2.9 in \cite{paulsen}) we get from the above inequality with $x=1$,
\[\|\Ep\|^2\leq \|\Ep\| \ \Rightarrow \|\Ep\|\leq 1.\]
As $\Ep$ is a contraction, we get $\Ep(1)\leq 1$ and hence by trace preservation we get $\Ep(1)=1$.

If $\Me$ is not trivial, then there exists a projection $p\in \Me$. Now by definition of multiplicative domain, $\Ep(p)$ is again a projection, call it $q$. Using the trace preserving property we get $p\sim q$. This is a contradiction following Proposition \ref{prop-proj-eqi}. The converse follows exactly in the similar way.  
\end{proof}
It should be notes here that being fully irreducible or equivalently having $\Me=\mathbb{C}1$, does not force the map to be strictly positive. The following example verifies this fact.
\begin{example}
Consider the map $\Ep:\mathcal{M}_3\rightarrow\mathcal{M}_3$ defined by
\[\Ep(x)=\frac{1}{2}(\tr(x)1-x^t).\]
It is easy to verify that $\Ep$ is unital and trace preserving Schwarz map. It follows that any rank one projection is mapped to a rank 2 element, so rank one projections can not be in the multiplicative domain. Since this domain is a unital C$^*$-subalgebra, it follows that there is no rank 2 projection in the multiplicative domain as well. Hence $\Me=\mathbb{C}1$.  
 Now it is easily seen that
the image of the rank one matrix unit $E_{11}$, $\Ep(E_{11})=\frac{1}{2}(E_{22}+E_{33})$ which is of rank $2(\neq 3)$.
\end{example} 

 We are ready to state and prove the main theorem of this article.
\begin{theorem}\label{main thrm}
Let $\Ep:\Md\rightarrow\Md$ be a trace preserving primitive Schwarz map with the multiplicative index $\kappa$. Then 
$\Ep^{\kappa(d-1)}$ sends every positive semi definite matrix to a positive definite matrix. That is,
\[\omega(\Ep)\leq \kappa(\Ep)(d-1).\]
\end{theorem}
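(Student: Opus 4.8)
The plan is to locate a power of $\Ep$ that is fully irreducible, invoke Theorem~\ref{thm-strict-kernel} to turn full irreducibility into a strict rank increase, and then run an induction on rank. The natural candidate is $\Psi:=\Ep^{\kappa}$, where $\kappa=\kappa(\Ep)$ is the multiplicative index. First I would record that $\Psi$ is again a unital trace preserving Schwarz map: the composition of unital Schwarz maps is a unital Schwarz map (apply the Schwarz inequality for the inner map, then positivity/monotonicity followed by the Schwarz inequality for the outer map), and trace preservation is stable under composition. By the definition of $\kappa$ the multiplicative domain has stabilized by stage $\kappa$, so $\mathcal{M}_{\Psi}=\mathcal{M}_{\Ep^{\kappa}}=\Mi$.

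The crux is to show that this stabilized domain is trivial, $\Mi=\mathbb{C}1_d$; this is exactly where primitivity (rather than mere full irreducibility of $\Ep$ itself) is used, and it is what forces the factor $\kappa$ into the bound. I would argue as follows. If $a\in\Mi$, then $a\in\mathcal{M}_{\Ep^{n}}$ for every $n$, so $\Ep^{n}(aa^{*})=\Ep^{n}(a)\Ep^{n}(a^{*})$ for all $n$. Primitivity gives the limit $\lim_{n\to\infty}\Ep^{n}(\rho)=\tr(\rho)1_d$ recorded before the definition of $\omega(\Ep)$; letting $n\to\infty$ in the displayed identity yields $\tr(aa^{*})=|\tr(a)|^{2}$ (with $\tr$ normalized so that $\tr(1_d)=1$, consistent with $P_\infty(1_d)=1_d$). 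Since $\tr(aa^{*})=\tr(a^{*}a)=\langle a,a\rangle$ and $\tr(a)=\langle a,1_d\rangle$ in the Hilbert--Schmidt inner product, this is precisely the equality case of the Cauchy--Schwarz inequality for the pair $(a,1_d)$, which forces $a\in\mathbb{C}1_d$. Hence $\Mi=\mathbb{C}1_d$, and by Proposition~\ref{fully-irrred-trivial-mult.} the map $\Psi=\Ep^{\kappa}$ is fully irreducible.

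Full irreducibility of $\Psi$ now feeds into Theorem~\ref{thm-strict-kernel}, giving that $\Psi$ is strictly rank increasing: $\operatorname{Rank}(\Psi(b))>\operatorname{Rank}(b)$ for every singular positive $b\in\Md$. To finish, fix a nonzero positive semidefinite $a$ and set $a_{j}=\Psi^{j}(a)=\Ep^{\kappa j}(a)$. Each $a_{j}$ is positive semidefinite, and by Proposition~\ref{rank-increase} the ranks never decrease; as long as $a_{j}$ is singular, the strict increase gives $\operatorname{Rank}(a_{j+1})\ge \operatorname{Rank}(a_{j})+1$. Starting from $\operatorname{Rank}(a_{0})\ge 1$, after at most $d-1$ applications the rank reaches $d$, so $a_{d-1}=\Ep^{\kappa(d-1)}(a)$ is positive definite. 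This yields $\omega(\Ep)\le \kappa(\Ep)(d-1)$. The only genuinely delicate point is the collapse $\Mi=\mathbb{C}1_d$; once that is in hand, the remaining steps are routine applications of the results already established above.
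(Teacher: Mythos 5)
Your proposal is correct and follows essentially the same route as the paper: pass to $\Psi=\Ep^{\kappa}$, use triviality of $\Mi$ together with Proposition~\ref{fully-irrred-trivial-mult.} and Theorem~\ref{thm-strict-kernel} to obtain a strict rank increase, and iterate $d-1$ times. The only difference is that you prove $\Mi=\mathbb{C}1$ directly via the limit $\Ep^{n}(\rho)\to\tr(\rho)1_d$ and the equality case of Cauchy--Schwarz, whereas the paper simply cites Corollary 3.5 of \cite{miza}; your argument is a valid self-contained substitute for that citation.
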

\begin{proof}
As $\Ep$ is a trace preserving Schwarz map, it is unital. 
Following the Corollary 3.5 in \cite{miza}, $\Ep$ is primitive implies that $\Mi=\mathbb{C}1$.
First, note the following chain has length $\kappa$:
\[\Me\supseteq\mathcal{M}_{\Ep^2}\supseteq\cdots
\supseteq\Mi=\mathbb{C}1.\]
First of all observe that $\kappa(\Ep)\leq \omega(\Ep)$. 
This is because if we take a projection $p\in \Me$ such that $\Ep(p)\in \Me$, then $\Ep(p)$ is again a projection. By the definition of $\mathcal{M}_{\Ep^2}$ (see \cite{miza}), $p\in \mathcal{M}_{\Ep^2}$. If $\mathcal{M}_{\Ep^2}$ is still not $\mathbb{C}1$, we get $\Ep(p)$ is not positive definite. Repeating the argument for $\Ep^2, \Ep^3, \cdots\Ep^{(\kappa-1)}$ we see that if $p\in \mathcal{M}_{\Ep^{\kappa -1}}$, then $\Ep^{\kappa}(p)\in \mathbb{C}1$ which then makes it invertible and hence $\Ep^{(\kappa)}$ maps every projection in $\Me$ to an invertible operator. Thus $\kappa(\Ep)\leq \omega(\Ep)$. 

Now we will show $\omega(\Ep)\leq \kappa(\Ep)(d-1)$. 
Strict positivity of any map $\Phi$ will be guaranteed if $\Phi(p)$ is invertible for any rank one projection $p$.
Indeed, Given any projection $q$, there exists a rank one projection $p$ such that $p\leq q$. Hence for any positive linear map $\Phi$, $\Phi(p)\leq\Phi(q)$. So if $\Phi(p)$ is invertible, then so is $\Phi(q)$. Now if the spectral decomposition of  positive element $a\in \mathcal{M}_{d}$ be given by
\[a=\displaystyle \sum_{j=1}^{k}\lambda_{j}p_{j},\]
where $p_{j}$'s are spectral projections onto the the eigenspace corresponding to the eigenvalue $\lambda_{j}$, then $\Phi(a)=\displaystyle\sum_{j=1}^{k}\lambda_{j}
\Phi(p_{j})$. Now if for every $\xi\in \mathbb{C}^{d}$, $\langle\Phi(p_{j})\xi,\xi\rangle>0$, then 
$\langle\Phi(a)\xi,\xi\rangle>0$.
Hence it is enough to show that for any rank one projection $p$, $\Phi(p)$ is invertible.

Take a rank one projection $p\in \Md$. Since $\Ep^{\kappa(\Ep)}$ has multiplicative domain $\Mi=\mathbb{C}1$, considering the unital map $\Phi \ (=\mathcal{E}^{\kappa(\Ep)})$ we see that it has trivial multiplicative domain. Now by Proposition \ref{fully-irrred-trivial-mult.} and Theorem \ref{thm-strict-kernel}, it is strictly kernel reducing, that is,
\[\rm{dim \ Ker}(\Phi(a))<\rm{dim \ Ker}(a), \ \forall \ a\in \mathcal{M}_{d+}.\]
Here $\mathcal{M}_{d+}$ denotes the set of all positive semidefinite elements of $\Md$.
Now taking a rank one projection $p$, we evaluate 
\[\rm{dim \ Ker}\Phi^{(d-1)}(p)<\rm{dim \ Ker}\Phi^{(d-2)}(p)<\cdots<\rm{dim \ Ker}\Phi(p)<\rm{dim \ Ker}(p) .\]
Since $p$ has rank 1, the kernel has dimension $d-1$ and since the dimension is a non negative integer function, the above inequality yields \[\rm{dim \ Ker}\Phi^{(d-1)}(p)=0.\] 
 Hence $\Phi^{(d-1)}(p)$ is invertible. Since this holds for every rank one projection, $\Phi^{d-1}$ is strictly positive and hence $\Ep^{\kappa(\Ep)(d-1)}$ is strictly positive.  
\end{proof}
Now it is important to find a suitable bound for $\kappa(\Ep)$ for a trace preserving Schwarz map. In \cite{Sam-Miza}, Theorem 3.6 such a bound was put forward. The key point is that the bound was obtained by utilizing the C$^*$-algebra structure of the subalgebras $\Me, \mathcal{M}_{\Ep^2}$ etc. As for a Schwarz map $\Ep$, these subalgebras are all C$^*$-algebras, we can use this bound in our context.
\begin{corollary}\label{cor-main-bound}
For a trace preserving primitive Schwarz map $\Ep$ acting on $\Md$, we have 
\[\omega(\Ep)\leq 2(d-1)^2.\]
\end{corollary}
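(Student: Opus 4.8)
The plan is to combine the main theorem, which gives $\omega(\Ep)\leq \kappa(\Ep)(d-1)$, with an external bound on the multiplicative index $\kappa(\Ep)$. Once I have a bound of the form $\kappa(\Ep)\leq 2(d-1)$ for a trace preserving primitive Schwarz map, the corollary follows immediately by substitution. So the entire content of the proof reduces to invoking Theorem 3.6 from \cite{Sam-Miza}, which bounds the length of the stabilizing chain of multiplicative domains, and then checking that its hypotheses are met in the present setting.

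First I would recall from Theorem \ref{main thrm} that $\omega(\Ep)\leq \kappa(\Ep)(d-1)$ holds for any trace preserving primitive Schwarz map on $\Md$. The only remaining ingredient is an estimate for $\kappa(\Ep)$, the least $n$ for which the decreasing chain $\Me\supseteq\mathcal{M}_{\Ep^2}\supseteq\cdots$ stabilizes. The crucial observation, as emphasized in the paragraph preceding the corollary, is that for a Schwarz map every term $\mathcal{M}_{\Ep^n}$ is a genuine C$^*$-subalgebra of $\Md$ (by Corollary 2.1.6 in \cite{stormer}), so the chain is a strictly decreasing chain of C$^*$-subalgebras until it stabilizes at $\Mi=\mathbb{C}1$ by primitivity. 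The length of such a chain is controlled purely by the C$^*$-algebra structure, which is exactly what Theorem 3.6 in \cite{Sam-Miza} quantifies.

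The step I expect to be the main obstacle is verifying that the bound from \cite{Sam-Miza} indeed yields $\kappa(\Ep)\leq 2(d-1)$ and not some weaker estimate, and ensuring that that theorem applies to Schwarz maps rather than only to unital completely positive channels. The subtlety is that a strictly decreasing chain of unital C$^*$-subalgebras of $\Md$ can in principle have length as large as $d-1$ purely on dimension grounds (each proper inclusion drops the dimension), but the multiplicative-index bound counts iterations of $\Ep$ rather than inclusions in the chain, and a single proper inclusion in the chain may require more than one iteration of $\Ep$ to be realized. Thus the factor of $2$ and the quadratic dependence emerge from combining the chain length (of order $d-1$) with the number of iterations needed per step (also of order $d-1$), as captured by the cited bound $\kappa(\Ep)\leq 2(d-1)$.

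Concretely, I would conclude as follows: by Theorem 3.6 in \cite{Sam-Miza}, applicable since $\Ep$ is a trace preserving Schwarz map and its multiplicative-domain chain consists of C$^*$-subalgebras, we have $\kappa(\Ep)\leq 2(d-1)$. Substituting into the inequality of Theorem \ref{main thrm} gives
\[
\omega(\Ep)\leq \kappa(\Ep)(d-1)\leq 2(d-1)(d-1)=2(d-1)^2,
\]
which is the claimed bound. The only thing to double-check is that the primitivity hypothesis forces the chain to terminate at $\mathbb{C}1$, which is guaranteed by Corollary 3.5 in \cite{miza} as used in the proof of Theorem \ref{main thrm}, so no additional assumptions on $\Ep$ are needed beyond those already in the statement.
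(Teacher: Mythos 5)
Your proposal matches the paper's own proof: both invoke the bound $\kappa(\Ep)\leq 2(d-1)$ from Theorem 3.6 of \cite{Sam-Miza} (justified by the fact that for a Schwarz map the multiplicative domains form a chain of C$^*$-subalgebras terminating at $\mathbb{C}1$ by primitivity) and substitute it into Theorem \ref{main thrm} to obtain $\omega(\Ep)\leq 2(d-1)^2$. The approach and the key citation are essentially identical.
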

\begin{proof}
 As was shown in \cite{Sam-Miza}, $\kappa(\Ep)$ must be less than the maximum length of the chain of subalgebras 
 \[\mathcal{M}_{\Ep}\supseteq\mathcal{M}_{\Ep^2}\supseteq\cdots
\supseteq\mathcal{M}_{\Ep^n}\supseteq \Mi.\]

 It follows that $\kappa(\Ep)\leq 2(d-1)$. So it shows that the Weilandt number $\omega(\Ep)\leq 2(d-1).(d-1)$.
 \end{proof}
 \begin{remark}
 It should be noted here that similar to the bound given in 
 \cite{Wielandt}, we don't know whether the inequality given in Corollary \ref{cor-main-bound} is sharp or not. Even for a quantum channel, the optimal value for $\kappa$ 
is still unknown and hence deciding whether a positive map attains the exact Wielandt bound is an avenue for future research. It was also brought to our attention through private communication by Micha{\l} Bia{\l}o\' nczyk that our proof in Theorem \ref{main thrm} works for arbitrary positive linear maps without the Schwarz inequality assumption. Although the notion of multiplicative domain is 
same for positive maps as Schwarz maps, this domains are no longer C$^*$-algebras. Indeed, they are Jordan algebras (see \cite{stormer2007}). To get a bound for the multiplicative index in this case, one needs to get a maximum bound for a chain of Jordan algebras as in the proof of the Corollary \ref{cor-main-bound}. This again a possibility for future research.  
 \end{remark}
We utilize these findings to quantum channels to get better bounds of some classes of channels. We recall that a linear map $\Phi$ on $\Md$ is called PPT(positive partial transpose) if $\Phi$ and $\Phi\circ t$ are completely positive, where `t' denotes the transpose map $x\mapsto x^t$. Also, a linear map $\Phi$ on $\Md$ is entanglement breaking if $(id_n\otimes \Phi)(\Gamma)$ is always a separable state for any bipartite state $\Gamma\in \mathcal{M}_n\otimes \Md$. For more properties of these maps see \cite{stormer}, \cite{entng-brkng}.
\begin{corollary}
For unital PPT and entanglement breaking channels $\Ep$ acting on $\Md$, we have
 \[\omega(\Ep)\leq d(d-1).\] 
\end{corollary}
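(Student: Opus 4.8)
The plan is to read the bound off the main theorem. Theorem~\ref{main thrm} gives $\omega(\Ep)\le\kappa(\Ep)(d-1)$, and Corollary~\ref{cor-main-bound} already yields $\kappa(\Ep)\le 2(d-1)$ for every trace preserving Schwarz map, so the whole task reduces to sharpening this to $\kappa(\Ep)\le d$ in the PPT and entanglement breaking cases; that immediately produces $\omega(\Ep)\le d(d-1)$. The reason the general estimate is $2(d-1)$ is that a maximal strictly decreasing chain of unital C$^*$-subalgebras of $\Md$ first collapses $\Md$ down to the diagonal algebra $\cong\mathbb{C}^d$ (costing $d-1$ steps) and then collapses $\mathbb{C}^d$ to $\mathbb{C}1$ (another $d-1$ steps). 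The point I would exploit is that for these two classes the first multiplicative domain $\Me$ is already \emph{commutative}, so that the ``non-abelian half'' of such a chain never occurs.

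Granting commutativity of $\Me$, the bound on $\kappa$ is a short counting argument. A commutative unital C$^*$-subalgebra of $\Md$ is $*$-isomorphic to $\mathbb{C}^r$ with $r\le d$. As in the proof of Theorem~\ref{main thrm}, primitivity gives $\Mi=\mathbb{C}1$; and since the chain $\Me\supseteq\mathcal{M}_{\Ep^2}\supseteq\cdots\supseteq\Mi$ is decreasing, every $\mathcal{M}_{\Ep^n}$ is a subalgebra of the abelian algebra $\Me$, hence itself abelian, so the integers $\dim\mathcal{M}_{\Ep^n}$ strictly decrease from $r$ down to $1$ until stabilization. A strictly decreasing sequence of positive integers starting at $r$ and ending at $1$ has at most $r$ terms, whence $\kappa(\Ep)\le r\le d$ and therefore $\omega(\Ep)\le\kappa(\Ep)(d-1)\le d(d-1)$.

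It remains to prove commutativity of $\Me$. For an entanglement breaking channel I would use the factorization $\Ep=B\circ A$ through a commutative algebra, where $A:\Md\to\mathbb{C}^m$, $A(x)=(\tr(F_1x),\dots,\tr(F_mx))$, and $B:\mathbb{C}^m\to\Md$, $B((c_k)_k)=\sum_k c_k\rho_k$, are unital completely positive. For $a\in\Me$ the Schwarz inequality is an equality for $\Ep$; chaining the Schwarz inequalities for $A$ (whose target is commutative) and for $B$ forces both to be equalities, so in particular $A(a)$ lies in the multiplicative domain $\mathcal{M}_B\subseteq\mathbb{C}^m$, which is abelian. Since $\Ep$ is trace preserving it is faithful, so $\Ep|_{\Me}$ is an \emph{injective} $*$-homomorphism, and its image $\Ep(\Me)=B(A(\Me))\subseteq B(\mathcal{M}_B)$ is the homomorphic image of an abelian algebra, hence abelian. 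Therefore $\Me\cong\Ep(\Me)$ is abelian.

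The main obstacle is the PPT case, where no factorization through a commutative algebra is available. Here I would argue by contradiction through the operational meaning of $\Me$. On its multiplicative domain $\Ep$ restricts to an injective $*$-homomorphism $\Ep|_{\Me}:\Me\to\Md$, so if $\Me$ were non-abelian it would contain a copy of $\mathcal{M}_2$ that $\Ep$ maps isomorphically onto a subalgebra of $\Md$; composing $\Ep$ with the inverse $*$-isomorphism (realized as a conditional expectation followed by a $*$-isomorphism, hence a channel) would transmit a qubit noiselessly through $\Ep$ and force its quantum capacity to be positive. This contradicts the vanishing quantum capacity of every PPT channel, so $\Me$ must be abelian; since entanglement breaking channels are in particular PPT, this also reproves the previous case. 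The delicate point to make rigorous is exactly the implication ``non-abelian multiplicative domain $\Rightarrow$ noiseless qubit subchannel''; an alternative and more computational route would localize the problem to a $2\times 2$ matrix-unit block inside $\Me$ and show directly that positivity of the partial transpose of the Choi matrix of $\Ep$ fails on that block.
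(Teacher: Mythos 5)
Your overall strategy is the same as the paper's: reduce to showing that $\Me$ is abelian, deduce $\kappa(\Ep)\le d$ from the length of a strictly decreasing chain of abelian subalgebras, and conclude via Theorem \ref{main thrm}. The difference is that the paper disposes of both ingredients by citation --- it quotes \cite{R-J-P} for the abelianness of the multiplicative domain of unital PPT and entanglement breaking channels, and Proposition 3.2 of \cite{Sam-Miza} for the resulting bound $\kappa(\Ep)\le d$ --- whereas you attempt to prove them. Your counting argument is correct as stated: since every $\mathcal{M}_{\Ep^n}$ sits inside the abelian algebra $\Me$, the dimensions form a strictly decreasing sequence of integers from at most $d$ down to $1$, giving $\kappa(\Ep)\le d$. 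Your entanglement breaking argument is essentially the standard one, but note a small repair is needed: in the Holevo form $\Ep(x)=\sum_k \tr(F_k x)\rho_k$ the map $A(x)=(\tr(F_k x))_k$ satisfies $A(1)=(\tr(F_k))_k$, which is not the identity of $\mathbb{C}^m$ in general, so $A$ as written need not be unital or Schwarz; one should renormalize, e.g.\ $A'(x)=(\tr(F_kx)/\tr(F_k))_k$ and $B'((c_k))=\sum_k c_k\,\tr(F_k)\rho_k$, after which the chain of Schwarz inequalities and the conclusion $A'(a)\in\mathcal{M}_{B'}$ go through.

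The genuine gap is the PPT case, and you have flagged it yourself. The implication ``$\Me$ non-abelian $\Rightarrow$ a qubit is transmitted noiselessly $\Rightarrow$ positive quantum capacity, contradicting PPT'' rests on two nontrivial facts that are not established in your argument: that the inverse of the $*$-isomorphism $\Ep|_{\Me}$ can be implemented by a physical recovery channel on all of $\Md$ (this requires a conditional expectation onto $\Ep(\Me)$ composed with the inverse isomorphism, and one must check this recovers states supported on the relevant subspace, not just elements of the subalgebra), and that PPT channels have vanishing quantum capacity (itself a theorem one would have to import). The more direct route, and essentially the one taken in \cite{R-J-P}, is your ``computational'' alternative: a $*$-isomorphism of a copy of $\mathcal{M}_2$ has a maximally entangled Choi block, and compressing the Choi matrix of $\Ep$ to the corresponding product subspace shows that positivity of the partial transpose fails. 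As it stands, your proof is complete for the entanglement breaking case (modulo the normalization) but incomplete for the PPT case; the paper avoids both issues by citing the literature.
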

\begin{proof}
We know that the unital PPT channels and the entanglement breaking channels have abelian multiplicative domain (see \cite{R-J-P}).
It is not hard to see that the multiplicative index of these channels can be maximum $d$ (see Proposition 3.2 in \cite{Sam-Miza}). So \[\omega(\Ep)=\kappa(\Ep)(d-1)\leq d(d-1).\]
\end{proof}
\begin{remark}
Since $d(d-1)< d^2$, we have a better bound of Weilandt inequality than that given in \cite{Wielandt} for PPT and entanglement breaking channels.
\end{remark} 
\begin{proposition}
Let $\Ep:\Md\rightarrow\Md$ be a trace preserving primitive map such that $\Ep$ and the adjoint map $\Ep^*$ satisfies the Schwarz inequality. Then $\omega(\Ep^*)\leq 2(d-1)^2$.
\end{proposition}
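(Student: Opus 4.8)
The plan is to prove this by transferring strict positivity across the Hilbert--Schmidt adjoint, rather than by re-running the whole machinery for $\Ep^{*}$. Since $\Ep$ is a trace preserving primitive Schwarz map, Corollary \ref{cor-main-bound} already gives $\omega(\Ep)\le 2(d-1)^2$; put $m=\omega(\Ep)$, so that $\Ep^{m}$ is strictly positive. Moreover $\Ep^{*}$ is positive, because the adjoint of a positive map is positive ($\tr\big(b\,\Ep^{*}(a)\big)=\tr\big(\Ep(b)\,a\big)\ge 0$ for all positive semidefinite $a,b$). It therefore remains only to carry strict positivity from $\Ep^{m}$ to $(\Ep^{*})^{m}=(\Ep^{m})^{*}$.

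For this step I would use the reduction from the proof of Theorem \ref{main thrm}: a positive map is strictly positive as soon as it sends every rank-one projection to a positive definite element. Writing $P_\xi$, $P_\eta$ for the rank-one projections onto unit vectors $\xi,\eta$ and using the defining relation $\tr(\Ep^{m}(x)y)=\tr(x\,(\Ep^{*})^{m}(y))$, I would establish the symmetric identity
\[
\big\langle (\Ep^{*})^{m}(P_\eta)\,\xi,\ \xi\big\rangle
=\tr\big(P_\xi\,(\Ep^{m})^{*}(P_\eta)\big)
=\tr\big(\Ep^{m}(P_\xi)\,P_\eta\big)
=\big\langle \Ep^{m}(P_\xi)\,\eta,\ \eta\big\rangle .
\]
Because $\Ep^{m}$ is strictly positive, the right-hand side is strictly positive for all nonzero $\xi,\eta$, hence so is the left-hand side; thus $(\Ep^{*})^{m}(P_\eta)$ is positive definite for every rank-one projection $P_\eta$, and $(\Ep^{*})^{m}$ is strictly positive. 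This yields $\omega(\Ep^{*})\le m=\omega(\Ep)\le 2(d-1)^2$ (in fact $\omega(\Ep^{*})=\omega(\Ep)$). Note that this argument needs only positivity of $\Ep^{*}$; the Schwarz hypothesis on $\Ep^{*}$ is not required here.

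Two points finish the write-up. First, the existence of a strictly positive power of $\Ep^{*}$ already forces $\Ep^{*}$ to be primitive --- the implication \emph{primitive $\Rightarrow$ some power strictly positive} recorded before the definition of $\omega$ reverses by the standard Perron--Frobenius dichotomy --- so $\omega(\Ep^{*})$ is genuinely the index of primitivity. Second, if one instead prefers to deduce the bound by verifying the hypotheses of Corollary \ref{cor-main-bound} directly for $\Ep^{*}$ (which is exactly where the assumption that $\Ep^{*}$ is a Schwarz map would be used), then one checks: $\Ep$ is unital because a trace preserving Schwarz map is unital (first line of the proof of Proposition \ref{fully-irrred-trivial-mult.}), so $\Ep^{*}$ is trace preserving; $\Ep$ trace preserving gives $\Ep^{*}$ unital; and $\mathrm{Spec}(\Ep^{*})=\overline{\mathrm{Spec}(\Ep)}$ (the matrix of $\Ep^{*}$ is the conjugate transpose of that of $\Ep$), so the peripheral spectrum of $\Ep^{*}$ is again $\{1\}$.

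The step I expect to be the main obstacle is the irreducibility of $\Ep^{*}$ in this second route. The naive hope that a positive definite right Perron eigenvector suffices is false, since a reducible positive map can still possess one; so I would invoke the two-sided Perron--Frobenius characterization (\cite{evans-krohn}) that irreducibility of a positive map is equivalent to its Perron eigenvalue being simple with \emph{both} a positive definite right eigenvector and a positive definite left eigenvector. Here both are available precisely because $\Ep$ is simultaneously unital and trace preserving: $\Ep^{*}$ fixes $1_d$ (unitality of $\Ep^{*}$), while $(\Ep^{*})^{*}=\Ep$ fixes $1_d$ (unitality of $\Ep$), and simplicity of the eigenvalue $1$ for $\Ep^{*}$ transfers from $\Ep$ by the spectral duality above. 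Consequently $\Ep^{*}$ is irreducible, hence primitive, and Corollary \ref{cor-main-bound} applies to give $\omega(\Ep^{*})\le 2(d-1)^2$. Given the subtlety just flagged, I would present the adjoint-of-strict-positivity argument as the primary proof and relegate this spectral verification to a remark.
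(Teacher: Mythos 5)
Your proof is correct, but it takes a genuinely different route from the paper. The paper's own proof stays entirely inside the multiplicative-domain machinery: it checks that $\Ep^{*}$ is again a unital, trace preserving, primitive Schwarz map, invokes the identity $\kappa(\Ep)=\kappa(\Ep^{*})$ from \cite{Sam-Miza}, and then applies Theorem \ref{main thrm} (equivalently Corollary \ref{cor-main-bound}) directly to $\Ep^{*}$ --- essentially your ``second route.'' Your primary argument instead transfers strict positivity across the trace pairing: with $m=\omega(\Ep)$, the identity $\langle (\Ep^{*})^{m}(P_\eta)\xi,\xi\rangle=\tr\bigl(\Ep^{m}(P_\xi)P_\eta\bigr)$ shows at once that $(\Ep^{*})^{m}$ is strictly positive, so $\omega(\Ep^{*})\leq\omega(\Ep)$, and by symmetry $\omega(\Ep^{*})=\omega(\Ep)$. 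This is more elementary (no appeal to $\kappa(\Ep)=\kappa(\Ep^{*})$), uses the Schwarz hypothesis on $\Ep^{*}$ only to guarantee that $\Ep^{*}$ is primitive so that $\omega(\Ep^{*})$ is well defined, and yields the sharper conclusion that the two indices coincide. You are also right to flag primitivity of $\Ep^{*}$ as the delicate point: the paper simply asserts ``if $\Ep$ is primitive, then so is $\Ep^{*}$'' without argument, whereas your duality argument largely sidesteps it (a strictly positive power forces irreducibility immediately, since $\Ep^{*}(p)\leq\lambda p$ propagates to $(\Ep^{*})^{m}(p)\leq\lambda^{m}p$, which is impossible for a nontrivial projection once $(\Ep^{*})^{m}(p)$ is invertible; the triviality of the peripheral spectrum is where the Schwarz property and the Perron--Frobenius theory of \cite{evans-krohn} are still needed). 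In short: same conclusion, cleaner and slightly stronger argument, with a more honest treatment of the one step the paper glosses over.
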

\begin{proof}
Since trace preserving Schwarz maps are unital and the adjoint map of a unital map is trace preserving, we will have $\Ep^*$ is unital and trace preserving Schwarz map.
From \cite{Sam-Miza}, $\kappa(\Ep)=\kappa(\Ep^*)$. Also if $\Ep$ is primitive, then so is $\Ep^*$. Hence the assertion follows from the Theorem \ref{main thrm}.
\end{proof}
\section{Wielandt bound for tensor product channels }
   
Since the Wielandt bound as given in Theorem \ref{main thrm} involves the multiplicative index, we can get a handle of Wilandt inequality for tensor products of channels. This is possible because multiplicative domain 
(and hence multiplicative index) of tensor products of unital channels behave nicely. We state this result below:

\begin{theorem}[{\rm{See}\cite{Sam-Miza}}]\label{sam-miza}
If $\Phi,\Psi$ are two unital channels on $\Md$, then the multiplicative domain of $\Phi\otimes \Psi$ splits, that is, 
\[\mathcal{M}_{\Phi\otimes\Psi}=\mathcal{M}_{\Phi}\otimes
\mathcal{M}_{\Psi}.\]
Moreover, \[\kappa(\Phi\otimes\Psi)=\rm{max}\{\kappa(\Phi),\kappa(\Psi)\}=\rm{max}\{2(d_1-1), 2(d_2-1)\}.\]
\end{theorem}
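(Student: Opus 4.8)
The plan is to prove the two assertions separately: first the splitting of the multiplicative domain, and then the formula for the multiplicative index, which follows once the splitting is established and iterated. For the splitting $\mathcal{M}_{\Phi\otimes\Psi}=\mathcal{M}_{\Phi}\otimes\mathcal{M}_{\Psi}$, the inclusion $\supseteq$ is the easy direction: if $a\in\mathcal{M}_{\Phi}$ and $b\in\mathcal{M}_{\Psi}$, then for any elementary tensor $c\otimes d$ one checks directly that $(\Phi\otimes\Psi)\big((a\otimes b)(c\otimes d)\big)=(\Phi\otimes\Psi)(a\otimes b)\,(\Phi\otimes\Psi)(c\otimes d)$ using the multiplicativity of $\Phi$ on $a$ and of $\Psi$ on $b$ componentwise, and then extend by linearity to arbitrary elements of $\Md\otimes\Md$. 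Since $\mathcal{M}_{\Phi}\otimes\mathcal{M}_{\Psi}$ is the C$^*$-algebra generated by such elementary tensors, this gives one containment.

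The substantive direction is $\subseteq$, and I expect this to be the main obstacle. The idea I would pursue is the standard characterization of the multiplicative domain of a unital Schwarz map as the set of elements on which the Schwarz inequality is saturated: $a\in\mathcal{M}_{\Phi}$ if and only if $\Phi(a^*a)=\Phi(a)^*\Phi(a)$ and $\Phi(aa^*)=\Phi(a)\Phi(a)^*$. The tensor product $\Phi\otimes\Psi$ of unital Schwarz maps is again a unital Schwarz map, so one has the same characterization for $\mathcal{M}_{\Phi\otimes\Psi}$. The crux is then to show that saturation of the Schwarz inequality for $\Phi\otimes\Psi$ at an element $X\in\Md\otimes\Md$ forces $X$ to lie in $\mathcal{M}_{\Phi}\otimes\mathcal{M}_{\Psi}$. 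The natural tool is a Stinespring (or Kraus) representation of each map; writing $\Phi(x)=V^*(x\otimes 1)V$ and $\Psi(y)=W^*(y\otimes 1)W$ on suitable auxiliary spaces, the tensor map dilates to $(V\otimes W)^*(\,\cdot\otimes 1\,)(V\otimes W)$, and the equality condition in the Cauchy--Schwarz inequality for the associated GNS-type inner product translates into a commutation condition that, after expanding $X$ in a basis of one tensor factor, localizes the saturation to each factor separately. The delicate point is handling general (non-elementary) $X$: one should fix a basis $\{e_i\}$ of $\Md$, write $X=\sum_i e_i\otimes b_i$, and argue that saturation of the joint Schwarz inequality propagates to each $b_i$ and, symmetrically, to the first-factor components, thereby placing $X$ in the algebraic tensor product of the two multiplicative domains.

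With the splitting in hand, the index formula is a clean corollary. Applying the splitting to powers, $\mathcal{M}_{(\Phi\otimes\Psi)^n}=\mathcal{M}_{\Phi^n\otimes\Psi^n}=\mathcal{M}_{\Phi^n}\otimes\mathcal{M}_{\Psi^n}$, so the decreasing chain for $\Phi\otimes\Psi$ is the tensor product of the two individual chains. This product stabilizes to $\mathcal{M}_{\Phi^\infty}\otimes\mathcal{M}_{\Psi^\infty}$ exactly when both factors have stabilized, whence $\kappa(\Phi\otimes\Psi)=\max\{\kappa(\Phi),\kappa(\Psi)\}$; the final equality with $\max\{2(d_1-1),2(d_2-1)\}$ is then just the bound $\kappa\leq 2(d-1)$ established in the proof of Corollary \ref{cor-main-bound}. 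The only care needed here is to confirm that $\mathcal{M}_{\Phi^n}\otimes\mathcal{M}_{\Psi^n}=\mathcal{M}_{\Phi^{n+1}}\otimes\mathcal{M}_{\Psi^{n+1}}$ forces equality in each factor, which follows from the faithfulness of the tensor product of C$^*$-algebras together with the fact that the chains are already nested.
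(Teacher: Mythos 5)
The paper does not prove this statement: it is quoted verbatim from the reference \cite{Sam-Miza} and used as a black box, so there is no internal proof to compare your argument against. Judged on its own terms, your proposal is sound on the two easy parts --- the inclusion $\mathcal{M}_{\Phi}\otimes\mathcal{M}_{\Psi}\subseteq\mathcal{M}_{\Phi\otimes\Psi}$ by direct computation on elementary tensors, and the deduction of $\kappa(\Phi\otimes\Psi)=\max\{\kappa(\Phi),\kappa(\Psi)\}$ from the splitting applied to $(\Phi\otimes\Psi)^n=\Phi^n\otimes\Psi^n$ together with the dimension-count argument that $A\otimes B=A'\otimes B'$ with $A'\subseteq A$, $B'\subseteq B$ forces equality in each factor. (Note also that the displayed equality with $\max\{2(d_1-1),2(d_2-1)\}$ in the statement should really be an inequality; your reading of it as the bound from Corollary \ref{cor-main-bound} is the correct one.)

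The genuine gap is in the reverse inclusion, which is the entire content of the theorem. You reduce it to the claim that saturation of the Schwarz inequality for $\Phi\otimes\Psi$ at $X=\sum_i e_i\otimes b_i$ ``localizes'' to saturation in each tensor factor, but this is asserted, not proved, and it is exactly the step that does not follow from the setup you describe. Concretely, the saturation condition in the Stinespring picture reads $(X\otimes 1)(V\otimes W)=(V\otimes W)(\Phi\otimes\Psi)(X)$, i.e. $\sum_i (e_i\otimes 1)V\otimes (b_i\otimes 1)W=\sum_i V\Phi(e_i)\otimes W\Psi(b_i)$; since the basis elements $e_i$ enter through two \emph{different} linear maps, $e\mapsto (e\otimes 1)V$ and $e\mapsto V\Phi(e)$, you cannot equate coefficients of $e_i$ to extract a condition on each $b_i$, and no commutation condition falls out of Cauchy--Schwarz equality alone. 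A workable route (and, as far as I can tell, the one behind the cited result) is to first establish that for a unital \emph{trace-preserving} channel $\Phi$ one has $\mathcal{M}_{\Phi}=\{a:\Phi^{*}\Phi(a)=a\}$, i.e. the multiplicative domain is the fixed-point algebra of the unital channel $\Phi^{*}\Phi$, and then invoke the splitting of fixed-point algebras of tensor products of unital channels, $\mathrm{Fix}(\Lambda_1\otimes\Lambda_2)=\mathrm{Fix}(\Lambda_1)\otimes\mathrm{Fix}(\Lambda_2)$, applied to $\Lambda_i$ equal to $\Phi^{*}\Phi$ and $\Psi^{*}\Psi$; both ingredients are nontrivial theorems that your sketch neither states nor replaces. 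As written, your plan for the hard inclusion would not compile into a proof.
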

Using the  Theorem \ref{main thrm}, we immediately get
\begin{proposition}
For unital primitive channels \[\Phi:\mathcal{M}_{d_1}\rightarrow\mathcal{M}_{d_1} \ \text{and} \ \Psi:\mathcal{M}_{d_2}\rightarrow\mathcal{M}_{d_2}\] we have the Weilandt bound 
\[\omega(\Phi\otimes\Psi)=\rm{max}\{\omega(\Phi),\omega(\Psi)\}\leq\rm{max}\{2(d_1-1)^2,2(d_2-1)^2\}.\]
\end{proposition}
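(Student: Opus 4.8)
The plan is to prove the two inequalities $\omega(\Phi\otimes\Psi)\ge\max\{\omega(\Phi),\omega(\Psi)\}$ and $\omega(\Phi\otimes\Psi)\le\max\{\omega(\Phi),\omega(\Psi)\}$ separately, and then to read off the stated numerical bound from Corollary \ref{cor-main-bound} applied to each factor. First I would record that $\Phi\otimes\Psi$ is again a unital, trace preserving, completely positive map and that it is primitive: its spectrum is $\{\lambda\mu:\lambda\in\mathrm{Spec}(\Phi),\ \mu\in\mathrm{Spec}(\Psi)\}$, so a peripheral eigenvalue $\lambda\mu$ of modulus $1$ forces $|\lambda|=|\mu|=1$ and hence $\lambda=\mu=1$ by primitivity of the factors, giving peripheral spectrum $\{1\}$. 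Full irreducibility follows from Theorem \ref{sam-miza}: since $(\Phi\otimes\Psi)^n=\Phi^n\otimes\Psi^n$, the splitting yields $\mathcal{M}_{(\Phi\otimes\Psi)^n}=\mathcal{M}_{\Phi^n}\otimes\mathcal{M}_{\Psi^n}$, and intersecting over $n$ gives $\mathcal{M}_{(\Phi\otimes\Psi)^\infty}=\mathcal{M}_{\Phi^\infty}\otimes\mathcal{M}_{\Psi^\infty}=\mathbb{C}1\otimes\mathbb{C}1=\mathbb{C}1$, so $\omega(\Phi\otimes\Psi)$ is finite by the mechanism of Theorem \ref{main thrm}.

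For the lower bound I would restrict to elements of the form $a\otimes 1$. If $m=\omega(\Phi\otimes\Psi)$, then $\Phi^m\otimes\Psi^m$ is strictly positive; applying it to $a\otimes 1$ with $0\ne a\ge 0$ gives $\Phi^m(a)\otimes\Psi^m(1)=\Phi^m(a)\otimes 1$, which is positive definite if and only if $\Phi^m(a)$ is. Hence $\Phi^m$ is strictly positive and $m\ge\omega(\Phi)$; symmetrically $m\ge\omega(\Psi)$, so $\omega(\Phi\otimes\Psi)\ge\max\{\omega(\Phi),\omega(\Psi)\}$.

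The heart of the matter is the reverse inequality, which rests on the lemma that the tensor product of two strictly positive unital completely positive maps is strictly positive. Since strict positivity persists under further iteration ($\Phi^m(a)\ge\varepsilon 1$ gives $\Phi^{m+1}(a)=\Phi(\Phi^m(a))\ge\varepsilon\,\Phi(1)=\varepsilon 1$), it suffices to show that if $\Phi^m$ and $\Psi^m$ are both strictly positive then so is $\Phi^m\otimes\Psi^m$; taking $m=\max\{\omega(\Phi),\omega(\Psi)\}$ then yields the bound. To prove the lemma I would use the reformulation that a unital completely positive map $\Theta=\sum_\alpha A_\alpha\cdot A_\alpha^*$ is strictly positive if and only if $\{A_\alpha e\}_\alpha$ spans the whole space for every $e\ne 0$, equivalently if and only if $(\mathrm{span}\{A_\alpha\})^\perp$ contains no nonzero rank-one matrix. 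Writing $\mathcal{S}_1,\mathcal{S}_2$ for the Kraus spans of $\Phi^m,\Psi^m$, the Kraus span of the tensor is $\mathcal{S}_1\otimes\mathcal{S}_2$ and $(\mathcal{S}_1\otimes\mathcal{S}_2)^\perp=\mathcal{S}_1^\perp\otimes\mathcal{M}_{d_2}+\mathcal{M}_{d_1}\otimes\mathcal{S}_2^\perp$, so I must show this sum contains no nonzero rank-one element. Identifying a candidate rank-one $uw^*$ with matrices $U,W$ via vectorization and pairing out the second tensor leg against $Q\in\mathcal{S}_2$, the orthogonality conditions reduce to $W_1\,\mathcal{S}_2\,V_1\subseteq\mathcal{S}_1^{\flat}$, where $\mathcal{S}_1^{\flat}$ is a rank-preserving image of $\mathcal{S}_1^\perp$ (hence also free of rank-one matrices) and $V_1,W_1$ are nonzero; I would then exhibit a nonzero rank-one element of $W_1\mathcal{S}_2 V_1$ to reach a contradiction. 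When $V_1$ has rank one (product input direction) this is immediate from the spanning property $\mathcal{S}_2\zeta=\mathbb{C}^{d_2}$, and the general case uses in addition that the adjoint $(\Phi^m)^*$ is strictly positive, so that $\mathcal{S}_2^{T}\zeta=\mathbb{C}^{d_2}$ for all $\zeta\ne 0$.

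I expect producing a nonzero rank-one element of $W_1\mathcal{S}_2 V_1$ when $V_1$ has rank $\ge 2$ (entangled input directions) to be the main obstacle. The naive route of writing $\Theta=c\,\mathrm{Tr}(\cdot)1+R$ and tensoring fails here, because the remainder $R$ is merely positive and not completely positive (it is completely positive exactly when the Choi matrix of $\Theta$ is of full rank, which strict positivity does not guarantee), so the lemma cannot be reduced to an operator inequality and genuinely requires the bilinear spanning argument above. Once the lemma is established, the two inequalities combine to give $\omega(\Phi\otimes\Psi)=\max\{\omega(\Phi),\omega(\Psi)\}$, and Corollary \ref{cor-main-bound} applied to each factor yields $\omega(\Phi)\le 2(d_1-1)^2$ and $\omega(\Psi)\le 2(d_2-1)^2$, whence $\max\{\omega(\Phi),\omega(\Psi)\}\le\max\{2(d_1-1)^2,2(d_2-1)^2\}$, as claimed.
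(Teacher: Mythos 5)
Your lower bound $\omega(\Phi\otimes\Psi)\ge\max\{\omega(\Phi),\omega(\Psi)\}$ (via the test elements $a\otimes 1$ and $1\otimes a$) is fine, and so is the observation that the stated numerical bound would follow from Corollary \ref{cor-main-bound} once the equality is in hand. The genuine gap is the key lemma you reduce everything to, namely that $\Phi^m\otimes\Psi^m$ is strictly positive whenever $\Phi^m$ and $\Psi^m$ are. In the vectorized picture this lemma says: if $\mathcal{S}_1,\mathcal{S}_2$ are the Kraus spans and neither $\mathcal{S}_1^{\perp}$ nor $\mathcal{S}_2^{\perp}$ contains a nonzero rank-one matrix, then $\mathcal{S}_1 V\mathcal{S}_2^{T}$ is all of the $d_1\times d_2$ matrices for every nonzero $V$. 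Your argument disposes of the case where the input direction $V$ (or the dual witness $W$) has rank one, but the entangled case requires producing a \emph{rank-one} element of $V\mathcal{S}_2^{T}W^{*}$ when both $V$ and $W$ have rank at least $2$; merely producing a nonzero element (which the spanning property of $\mathcal{S}_2$ does give) yields no contradiction, because every nonzero element of $\mathcal{S}_1^{\perp}$ is allowed to have rank $\ge 2$. You explicitly flag this step as unresolved, and nothing in your setup supplies it, so the upper bound $\omega(\Phi\otimes\Psi)\le\max\{\omega(\Phi),\omega(\Psi)\}$ --- and hence the claimed equality and the final bound --- is not established. This is not a routine verification: it is a bilinear spanning problem of the same flavour as the quantum Wielandt question itself.

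For comparison, the paper does not attempt this lemma at all. Its proof runs entirely through the multiplicative-domain machinery: the splitting $\mathcal{M}_{(\Phi\otimes\Psi)^{\infty}}=\mathcal{M}_{\Phi^{\infty}}\otimes\mathcal{M}_{\Psi^{\infty}}$ from \cite{Sam-Miza} gives primitivity of $\Phi\otimes\Psi$, Theorem \ref{sam-miza} gives $\kappa(\Phi\otimes\Psi)=\max\{\kappa(\Phi),\kappa(\Psi)\}$, and the conclusion is then drawn from Theorem \ref{main thrm}. If you want to salvage your write-up, the safest course is to switch to that route; note, though, that even there the deduction is terse, since Theorem \ref{main thrm} applied to the map on $\mathcal{M}_{d_1d_2}$ gives a factor $(d_1d_2-1)$ rather than $(d_i-1)$, so the passage from $\kappa(\Phi\otimes\Psi)=\max\{\kappa(\Phi),\kappa(\Psi)\}$ to the displayed equality of Wielandt indices deserves its own justification. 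In either approach, the step you have left open is exactly where the content of the proposition lies.
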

\begin{proof}

First of all note that the tensor product of two primitive maps is primitive. Indeed, this fact was proved in \cite{Sam-Miza}, Theorem 2.10 utilizing the splitting property:
\[\mathcal{M}_{(\Phi\otimes\Psi)^\infty}=\mathcal{M}_
{\Phi^\infty}\otimes\mathcal{M}_{\Psi^\infty}.\]
Hence the result follows immediately from Theorem \ref{main thrm}.
\end{proof}

\section{Dichotomy result for the zero-error quantum capacity}
Similar to one given in \cite{Wielandt}, we can establish  a dichotomy result for unital channels with respect to the quantum capacity using the Wielandt bound.
\begin{definition}
The \textbf{one shot zero-error classical capacity} $(C_0(\Phi))$ of a channel $\Phi$ is defined to be  $\displaystyle\sup_{S\in\mathfrak{S}}\log|S|$, where $\mathfrak{S}$ is the  set of all families of density matrices $\{\rho_i\}$ such that $\tr(\Phi(\rho_i)\Phi(\rho_j))=0$ for $i\neq j$.
 \end{definition}
 Consider the following dichotomy theorem for one shot zero-error classical capacity of channels:
\begin{theorem}[\rm{Sanz-Garc\'ia-Wolf-Ciraq}, \cite{Wielandt}]\label{dichotomoy-wolf}
If $\Ep$ is a quantum channel with a full rank fixed point, then either $C_0(\Ep^n)\geq 1$ for all $n$ or $C_0(\Ep^{\omega(\Ep)})=0$. Here $\omega(\Ep)$ is the Wielandt bound for $\Ep$.
\end{theorem}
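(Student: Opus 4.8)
The plan is to reduce the zero-error condition to a statement about orthogonality of output ranges and then split according to whether $\Ep$ is primitive. First I would record the elementary reformulation: for positive semidefinite $A,B\in\Md$ one has $\tr(AB)=\|B^{1/2}A^{1/2}\|_2^2$, so $\tr(AB)=0$ iff $AB=0$ iff $\operatorname{range}(A)\perp\operatorname{range}(B)$. Spectrally decomposing any two density matrices that realize a code of size two, one sees that $C_0(\Phi)\ge 1$ holds precisely when there exist unit vectors $\psi_1,\psi_2$ with $\operatorname{range}\Phi(|\psi_1\rangle\langle\psi_1|)\perp\operatorname{range}\Phi(|\psi_2\rangle\langle\psi_2|)$. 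In particular both outputs must then be rank-deficient, so if $\Phi$ is strictly positive then $C_0(\Phi)=0$. It is also worth noting the monotonicity $C_0(\Ep^{n})=0\Rightarrow C_0(\Ep^{n+1})=0$ (if $\Ep^{n+1}(\rho_1)\perp\Ep^{n+1}(\rho_2)$ then $\Ep(\rho_1)\neq\Ep(\rho_2)$ are already distinguished by $\Ep^{n}$), so the set of exponents $n$ with $C_0(\Ep^{n})=0$ is an up-set, which shows the dichotomy is well posed.

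The first, easy, disjunct is then immediate. If $\Ep$ is primitive, $\omega(\Ep)$ is defined and, by its very definition together with the Perron--Frobenius discussion given above, $\Ep^{\omega(\Ep)}$ is strictly positive: it sends every positive semidefinite matrix to a positive definite one. Two positive definite matrices have full range, hence their ranges can never be orthogonal, and the reformulation of the first step gives $C_0(\Ep^{\omega(\Ep)})=0$, which is the second alternative.

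The substance lies in the remaining case: $\Ep$ has a full rank fixed point but is \emph{not} primitive, and I claim the first alternative then holds, namely $C_0(\Ep^{n})\ge 1$ for every $n$. Here I would invoke the non-commutative Perron--Frobenius theory for a channel with a faithful fixed point (see \cite{evans-krohn}, \cite{wolf}). Failure of primitivity happens in one of two ways. In case (i), $\Ep$ is reducible: there is a nontrivial invariant subspace, and since the fixed point is full rank there are no transient directions, so the orthogonal complement is invariant as well and $\mathbb{C}^d=K\oplus K^{\perp}$ splits into two $\Ep$-invariant blocks. In case (ii), $\Ep$ is irreducible but its peripheral spectrum is strictly larger than $\{1\}$; then it is the full group of $h$-th roots of unity for the period $h>1$, and $\mathbb{C}^d=\bigoplus_{k\in\mathbb{Z}/h}H_k$ with $\Ep$ cyclically shifting states supported on $H_k$ to states supported on $H_{k+1}$. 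In case (i) I pick unit vectors $\psi_1\in K$ and $\psi_2\in K^{\perp}$, so that $\Ep^{n}(|\psi_1\rangle\langle\psi_1|)$ and $\Ep^{n}(|\psi_2\rangle\langle\psi_2|)$ remain supported on the orthogonal blocks $K$ and $K^{\perp}$ for all $n$. In case (ii) I pick $\psi_1\in H_0$ and $\psi_2\in H_1$, whence the two outputs are supported on $H_{n}$ and $H_{n+1}$, which are orthogonal since $n\not\equiv n+1\pmod h$. In either situation the outputs have orthogonal ranges for every $n$, so $C_0(\Ep^{n})\ge 1$ for all $n$, giving the first alternative.

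The main obstacle is precisely this structural step in the non-primitive case: justifying that a full rank fixed point forces an invariant subspace to split off orthogonally in case (i), and that an irreducible non-primitive channel genuinely carries the cyclic block decomposition with the claimed shift behaviour in case (ii). Both facts are the heart of the Perron--Frobenius theory for positive maps with a faithful fixed point; the role of the full rank (faithfulness) hypothesis is exactly to eliminate transient behaviour and thereby upgrade one-sided invariance to an honest orthogonal, respectively cyclic, direct-sum decomposition. Once this decomposition is available, producing the distinguishing pair of pure states and concluding the dichotomy is immediate from the orthogonality reformulation of the first step.
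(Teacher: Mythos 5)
The paper does not actually prove this statement: it is quoted from \cite{Wielandt} as background for the paper's own dichotomy result for the zero-error quantum capacity (Theorem \ref{dichotomy-new}), so there is no in-paper proof to compare yours against. Judged on its own merits, your argument is correct and follows the standard (indeed the original) route. The reduction of $C_0(\Phi)\geq 1$ to the existence of two pure inputs with orthogonal output supports is right, via $\tr(AB)=\|A^{1/2}B^{1/2}\|_2^2$ for positive semidefinite $A,B$ and the spectral decomposition of the code states; the primitive branch is then immediate from the definition of $\omega(\Ep)$; and the non-primitive branch correctly splits into the reducible case and the irreducible-but-periodic case, each of which yields a pair of pure states whose outputs stay supported on orthogonal subspaces for every $n$.

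The only point a referee would press is the structural step you yourself flag as the main obstacle. Both halves are standard, but the reducible half is short enough that you should include it rather than cite it. If $\Ep(p)\leq\lambda p$ for a nontrivial projection $p$ with range $K$, then every state supported on $K$ is mapped to a state supported on $K$, so $p\,\Ep^*(1-p)\,p=0$; since $\Ep^*(1-p)\geq 0$, it is therefore supported on $K^{\perp}$ and, $\Ep^*$ being unital and positive, satisfies $\Ep^*(1-p)\leq 1-p$, i.e.\ $\Ep^*(p)\geq p$. Then $\tr\bigl(\rho_\infty(\Ep^*(p)-p)\bigr)=\tr(\Ep(\rho_\infty)p)-\tr(\rho_\infty p)=0$, and faithfulness of the full-rank fixed point $\rho_\infty$ forces $\Ep^*(p)=p$, so $K^{\perp}$ is invariant as well --- exactly the upgrade from one-sided invariance to an orthogonal splitting that you assert. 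The cyclic decomposition in the irreducible non-primitive case is Theorem 6.6 in \cite{wolf} (where the fixed point is automatically faithful, so the hypothesis is only genuinely used in the reducible case); citing it is legitimate. With these two facts supplied, your construction of the orthogonal code states for every $n$ completes the proof.
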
 
 
 Now we will prove a dichotomy theorem for another capacity of quantum channel.
 \begin{definition}{(see \cite{Shir-Tat})}
 Let $\Phi:\mathcal{B}(\mathcal{H})\rightarrow \Bh$ be a channel where $\rm{dim(\mathcal{H})}<\infty$. Then the \textbf{one shot zero-error quantum capacity} $(Q_0(\Phi))$ of a channel $\Phi$ is defined to be 
 $\displaystyle\sup_{\mathcal{K}\in \mathfrak{C}}\log \rm{dim(\mathcal{K})}$, where $\mathfrak{C}$ is the collection of subspaces $\mathcal{H}_0$ of $\mathcal{H}$ such that there exists a channel $\Psi$, satisfying $\Psi(\Phi(\rho))=\rho$, for all $\rho$ supported on $\mathcal{H}_0$.
 \end{definition}
 Now we write down the dichotomy theorem for this capacity of channel:
 \begin{theorem}\label{dichotomy-new}
 Let $\Ep$ be a unital channel on $\Md$. Then $Q_0(\Ep^n)>0$ for all $n$ or $Q_0(\Ep^{\omega(\Ep)})=0$.  
 \end{theorem}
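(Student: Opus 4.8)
The plan is to derive the dichotomy from two facts: the quantity $Q_0(\Ep^n)$ is non-increasing in $n$, and strict positivity of a channel forces it to vanish. First I would prove monotonicity. If a subspace $\mathcal{H}_0$ is correctable for $\Ep^{n+1}$ with recovery channel $\Psi$, so that $\Psi\circ\Ep^{n+1}$ is the identity on states supported in $\mathcal{H}_0$, then the identity $(\Psi\circ\Ep)\circ\Ep^{n}=\Psi\circ\Ep^{n+1}$ shows that $\Psi\circ\Ep$ recovers $\Ep^{n}$ on the same subspace. Thus every subspace correctable for $\Ep^{n+1}$ is already correctable for $\Ep^{n}$, and $Q_0(\Ep^{n+1})\leq Q_0(\Ep^{n})$. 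Since $Q_0(\Ep^{n})$ takes values in $\{\log k:k\in\mathbb{N}\}$ and is non-increasing, the sequence either stays strictly positive for every $n$ or drops to $0$ at a finite stage; the whole content of the theorem is to locate that stage at $\omega(\Ep)$.

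The central lemma I would establish is that a strictly positive unital channel $\Phi$ has $Q_0(\Phi)=0$. Suppose not, so there is a correctable subspace $\mathcal{H}_0$ of dimension at least two with recovery channel $\Psi$, and pick orthogonal unit vectors $\psi,\phi\in\mathcal{H}_0$. Since $\Psi$ returns $|\psi\rangle\langle\psi|$ and $|\phi\rangle\langle\phi|$ exactly from $\Phi(|\psi\rangle\langle\psi|)$ and $\Phi(|\phi\rangle\langle\phi|)$, and trace distance cannot increase under the channel $\Psi$, we obtain $\|\Phi(|\psi\rangle\langle\psi|)-\Phi(|\phi\rangle\langle\phi|)\|_1\geq 2$; as both outputs are states this is an equality, which forces them to have orthogonal supports. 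But strict positivity makes each output positive definite, hence of full support, and two full-support states cannot be orthogonally supported. This contradiction proves the lemma.

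Combining these ingredients produces the two branches. If $\Ep$ is primitive, then by the definition of the index of primitivity $\Ep^{\omega(\Ep)}$ is strictly positive, so the lemma yields $Q_0(\Ep^{\omega(\Ep)})=0$, and monotonicity propagates this to all larger $n$: we land in the second alternative. If $\Ep$ is not primitive, then by the characterization of primitivity via the stabilized multiplicative domain (Corollary 3.5 of \cite{miza}) the algebra $\Mi$ is strictly larger than $\mathbb{C}1$, and $\Ep$ restricts to a $*$-automorphism of the C$^*$-algebra $\Mi$; whenever $\Mi$ contains a block isomorphic to $\mathcal{M}_{n_k}$ with $n_k\geq 2$, the associated subspace carries quantum information that every iterate $\Ep^{n}$ transmits reversibly, giving $Q_0(\Ep^{n})\geq\log n_k>0$ for all $n$ and placing us in the first alternative.

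The hard part will be the non-primitive case in which $\Mi$ is non-trivial but abelian, as happens for a pinching channel. There the surviving algebra is classical, the automorphism argument produces no quantum-correctable qubit, and in fact $Q_0(\Ep^{n})$ may already equal $0$; so neither branch is delivered directly by the reasoning above. To secure the statement one must show that in this regime the capacity reaches $0$ no later than the stage governed by the multiplicative index, and reconcile this explicit threshold with the symbol $\omega(\Ep)$, which was only defined under primitivity. Pinning down this bound, rather than merely observing that the non-increasing sequence $Q_0(\Ep^n)$ is eventually constant, is where the argument will need the most care.
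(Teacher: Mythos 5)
Your treatment of the primitive branch is correct and in fact takes a genuinely different route from the paper: where the paper writes out the Kraus operators of $\Ep^{\omega(\Ep)}$, invokes the Knill--Laflamme conditions, and derives a contradiction from $\text{Span}\{a_{j_1}^*\cdots a_{j_{\omega(\Ep)}}^*a_{i_1}\cdots a_{i_{\omega(\Ep)}}\}=\Md$, you argue directly that a strictly positive channel admits no correctable subspace of dimension at least two, since exact recoverability of two orthogonal pure states forces their images to be at trace distance $2$, hence to have orthogonal supports, which is impossible when both images are positive definite. That lemma, together with your monotonicity observation, is sound, avoids Kraus decompositions entirely, and settles the case where $\Ep$ is primitive.

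The gap is exactly where you flagged it: when $\Ep$ is not primitive the theorem needs the first alternative, $Q_0(\Ep^n)>0$ for all $n$, and your argument delivers this only when $\Mi$ contains a non-abelian block. The paper's own Case 2 is the intended missing step: it takes a nontrivial projection $p\in\Mi$ with support $\mathcal{H}_0$, uses that $\Ep$ restricts to a $*$-automorphism of $\Mi$ with inverse $\Ep^*$ to get $\Ep^{*n}\circ\Ep^n(p)=p$, and declares $\Ep^{*n}$ a recovery channel for $\Ep^n$ on $\mathcal{H}_0$. But this verifies the recovery identity only on the single operator $p$, not on every state supported on $\mathcal{H}_0$ as the definition of $Q_0$ requires, and when $\Mi$ is abelian the states on $\text{Range}(p)$ with off-diagonal components lie outside $\Mi$ and are not recovered. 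The pinching channel $x\mapsto\sum_i E_{ii}xE_{ii}$ is non-primitive with $\Mi$ equal to the diagonal algebra, yet a Knill--Laflamme computation shows $Q_0(\Ep^n)=0$ for every $n$; so the difficulty you isolated is real, is not resolved by copying the paper's Case 2, and cannot be closed without either restricting to channels whose $\Mi$ has a non-abelian summand or reinterpreting the statement for non-primitive channels, for which $\omega(\Ep)$ is not even defined. As it stands your proof establishes the dichotomy only for primitive unital channels.
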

 \begin{proof}
 Let the Kraus representation of $\Ep$ be given by $\Ep(x)=\displaystyle \sum_{j=1}^m a_j x a_j^*$.
 We encounter two mutually exclusive situations: either $\Ep$ is primitive or not.
 
 \textbf{Case 1:}
 
 Suppose $\Ep$ is primitive. So $\Ep^{\omega(\Ep)}$ sends every density operators to density operators  with full rank.  Now if $Q_0(\Ep^{\omega(\Ep)})>0$, there is a proper-subspace $\mathcal{H}_0$ and channel $\Psi$ such that $\Psi(\Ep^{\omega(\Ep)}(\rho))=\rho$ for $\rho$ supported on $\mathcal{H}_0$. Observe that the Kraus representation of $\Ep^{\omega(\Ep)}$ is given by \[\Ep^{\omega(\Ep)}(x)=\sum_{i_1,\cdots, i_{\omega(\Ep)}=1}^m
 a_{i_1}\cdots a_{i_{\omega(\Ep)}}xa_{i_{\omega(\Ep)}}^*\cdots a_{i_1}^*.\]
Now note that by the Knill-Laflamme (\cite{knill-lafl}) condition of reversibility of the channel $\Ep^{\omega(\Ep)}$ is equivalent to the condition that, $\forall \xi,\eta\in \mathcal{H}_0$ with $\langle \xi,\eta\rangle=0$ and all $j_i,\cdots,j_{\omega(\Ep)},i_1,\cdots, i_{\omega(\Ep)}\in\{1,2,\cdots m\}$, implies that 
\begin{equation}\label{knill-laflamme}
\langle \xi, (a_{j_1}^*\cdots a_{j_{\omega(\Ep)}}^* a_{i_1}\cdots a_{i_{\omega(\Ep)}})\eta\rangle=0.
\end{equation}
Now if $\Ep^{\omega(\Ep)}$ is strictly positive, then so is $\Ep^{*\omega(\Ep)}\circ\Ep^{\omega(\Ep)}$. Indeed, for any unit vector $\psi( q=\psi\psi^*)$ and rank one projection $p=\phi\phi^*$, we have 
\[\langle\Ep^{*\omega(\Ep)}\circ\Ep^{\omega(\Ep)}(p)\psi,\psi\rangle=\tr(\Ep^{*\omega(\Ep)}\circ\Ep^{\omega(\Ep)}(p)q)=\tr(\Ep^{\omega(\Ep)}(p)\Ep^{\omega(\Ep)}(q)) >0.\]
Hence \[\text{Span}\{a_{j_1}^*\cdots a_{j_{\omega(\Ep)}}^* a_{i_1}\cdots a_{i_{\omega(\Ep)}}\}=\Md.\]
Clearly this violates the Equation \ref{knill-laflamme}.

\textbf{Case 2:}

Suppose $\Ep$ is not primitive. So $\Mi$ is non-trivial. 
Following Theorem 2.5 in \cite{miza} we get $\Md=\Mi\oplus \Mi^{\perp}$ and $\Ep$ is an automorphism on $\Mi$ with inverse being $\Ep^*$. It follows that $\Ep^n$ also is an automorphism 
on $\Mi$ of every $n$ with the inverse $\Ep^{*n}$. Now as $\Mi$ is non-trivial algebra, there exists a projection $p\in \Mi$ whose support is $\mathcal{H}_0$ say. Then for every $n\in \mathbb{N}$, we have $\Ep^{*n}\circ\Ep^n(p)=p$. So for every $n$, there is a recovery channel $\Ep^{*n}$ for $\Ep^{n}$ and hence $Q_0(\Ep^n)>0$.
 \end{proof}
 \begin{remark}
 The dichotomy result for classical capacity given in the Theorem \ref{dichotomoy-wolf} works for channel with full rank fixed point(example-unital channels). Although Theorem \ref{dichotomy-new} deals with unital channels, it can be shown that any fully irreducible channel, when properly scaled, can be made into a unital channel (see \cite{gurvits}). So in relevant contexts one does not loose much by choosing to work with unital channels.
 \end{remark}

 \section{Wielandt bound and strictly contractive channels}
Strictly contractive channels were first introduced in \cite{Raginsky} where (strict)contractivity of channels with respect to the metric induced by the trace norm $(\|\cdot\|_{1})$ was considered. Later, in \cite{Doug-Miza} these contractions were studied with respect to the Bures metric
from a more operator algebraic viewpoint. For convergence analysis and entropy production of bi-stochastic channels, these strictly contractive maps are key objects to look at (\cite{ergodic},\cite{M-S-D-W}). 
\begin{definition}
A channel $\Phi$ is said to be strictly contractive if 
\[\|\Phi(\rho)-\Phi(\sigma)\|_{1}\leq c(\Phi)\|\rho-\sigma\|_{1}\]
for all density matrices $\rho,\sigma$ in $\Md$, with $\rho\neq \sigma$ and $0\leq c(\Phi)<1$.
\end{definition}
The constant $c(\Phi)$ is called the contractive modulus of $\Phi$. The $\|\cdot\|_{1}$-norm is defined by \[\|x\|_{1}=\tr[(xx^*)^\frac{1}{2}],\] for all $x\in \Md$. It can be proved that the strictly contractive channels are primitive. The converse follows once the primitive map becomes strictly positive. We let $\Omega$ to be the map defined on $\Md$ as follows 
\[\Omega(x)=Tr(x)\frac{1}{d}.\]
This map is a unital, trace preserving and (completely)positive linear map. We will use the variational definition (see Lemma 3.64 in \cite{math-language-qit}) of $||\cdot||_1$ given as 
\[||x||_1=\sup_{u\in \Md, uu^*=1}|Tr(xu)|, \forall x\in \Md.\]

We need the following result to prove our main theorem of the section. Here by ``interior" point, we just mean the relative interior of a convex cone. The essential feature of the next result is to assert that positive linear maps that are strictly positive, lie in the interior of the cone of positive maps. 

We recall (See section 2 in \cite{KYE-96}), that a point $x$ inside a convex set $C$ is said to be an \textit{interior point} of $C$ if for each $y\in C$, there exists a $t>1$ such that $(1-t)y+tx\in C$.  
\begin{proposition}[\rm{See Prop. 2 in \cite{farenick} and Prop 4.1 in \cite{KYE-96}}]\label{Prop-int}
Let $\Phi$ be a positive linear map on $\Md$. Then the following statements are true:
\begin{enumerate}
\item $\Phi$ sends every positive semidefinite matrices in $\Md$ to positive definite ones.
\item $\Phi$ sends every rank one projections to invertible positive matrices.
\item $\Phi$ is in the interior of the cone of all positive linear maps on $\Md$.
\end{enumerate}
\end{proposition}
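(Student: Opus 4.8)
The plan is to establish the equivalence of the three conditions, each of which is a characterization of strict positivity of $\Phi$ (the statement is vacuous if read as three unconditional assertions, so the intended content is (1) $\Leftrightarrow$ (2) $\Leftrightarrow$ (3)). I would organize the argument as the two easy equivalences $(1)\Leftrightarrow(2)$ and the genuine content $(1)\Leftrightarrow(3)$.

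First I would dispose of $(1)\Leftrightarrow(2)$. The implication $(1)\Rightarrow(2)$ is immediate, since every rank one projection lies in $\mathcal{M}_{d+}$. For $(2)\Rightarrow(1)$ I would reuse the spectral-decomposition reduction from the proof of Theorem \ref{main thrm}: given a nonzero $a\in\mathcal{M}_{d+}$ with smallest nonzero eigenvalue $\lambda$ and support projection $s(a)$, pick any unit vector $\xi$ in the range of $a$ and set $p=\xi\xi^*$, so that $p\leq s(a)$ and hence $a\geq \lambda s(a)\geq \lambda p$. Positivity of $\Phi$ then gives $\Phi(a)\geq \lambda\,\Phi(p)$, and since $\Phi(p)$ is invertible and positive, $\Phi(a)$ is positive definite.

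Next, for $(1)\Rightarrow(3)$ I would fix an arbitrary positive map $\Psi$ and look for $t>1$ making $(1-t)\Psi+t\Phi$ positive, which is exactly the requirement $\Phi(a)\geq \tfrac{t-1}{t}\Psi(a)$ for all $a\in\mathcal{M}_{d+}$. By linearity and the fact that every density matrix is a convex combination of rank one projections, it suffices to check this domination on the compact set of rank one projections. Strict positivity supplies a uniform lower bound $\Phi(p)\geq \delta\,1$ with $\delta>0$ over that compact set, while continuity of $\Psi$ gives a uniform upper bound $\Psi(p)\leq M\,1$; taking $s=\delta/M$ and the corresponding $t$ yields the claim. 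The technical crux here is the compactness step that produces the uniform constant $\delta$.

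Finally, $(3)\Rightarrow(1)$, which I expect to be the main obstacle, since it requires manufacturing the correct test map that certifies $\Phi$ is not interior. Suppose $\Phi$ fails to be strictly positive; by the already-proved $(2)$ there is a rank one projection $p_0=\xi_0\xi_0^*$ (with $\xi_0$ a unit vector) and a unit vector $\eta$ with $\langle\Phi(p_0)\eta,\eta\rangle=0$. I would then consider the positive (indeed completely positive) map $\Psi(a)=\langle a\,\xi_0,\xi_0\rangle\,\eta\eta^*$, engineered so that its image on $p_0$ overshoots precisely along the kernel direction $\eta$ of $\Phi(p_0)$. If $\Phi$ were interior, some $t>1$ would render $\chi=(1-t)\Psi+t\Phi$ positive; but $\langle\chi(p_0)\eta,\eta\rangle=(1-t)\cdot 1+t\cdot 0=1-t<0$ contradicts $\chi(p_0)\geq 0$. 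Hence $(3)$ forces strict positivity, completing the cycle. The delicate point throughout is the choice of $\Psi$ aligning with the degenerate direction of $\Phi(p_0)$.
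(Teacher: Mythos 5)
The paper does not actually prove this proposition: it is stated as a quoted result with a pointer to Prop.~2 of \cite{farenick} and Prop.~4.1 of \cite{KYE-96}, and the text moves straight on to Theorem \ref{thm-strictly contrctv}. So there is no in-paper argument to compare against; what you have written is a self-contained proof of the intended equivalence, and it is correct. Your reading of the statement as a three-way equivalence is the right one (taken literally, the three items would be false for, say, the transpose map), and each leg checks out: $(1)\Leftrightarrow(2)$ via the spectral bound $a\geq\lambda p$; $(1)\Rightarrow(3)$ via compactness of the set of rank one projections, which yields the uniform $\delta$ with $\Phi(p)\geq\delta 1$ and hence the required $t>1$ for the paper's definition of interior point; and $(3)\Rightarrow(1)$ via the test map $\Psi(a)=\langle a\xi_0,\xi_0\rangle\,\eta\eta^*$ aligned with the kernel direction of $\Phi(p_0)$, which is exactly the witness needed. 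The only pedantic caveat is that strict positivity must be read as applying to \emph{nonzero} positive semidefinite $a$ (since $\Phi(0)=0$), a convention the paper itself uses implicitly in its Definition of strictly positive maps; your argument already restricts to nonzero $a$ in $(2)\Rightarrow(1)$, so nothing is missing.
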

\begin{theorem}\label{thm-strictly contrctv}
Given a primitive channel $\Ep:\Md\rightarrow\Md$, $\Ep^{\omega(\Ep)}$ is a strictly contractive map. 
\end{theorem}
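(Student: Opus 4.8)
The plan is to set $\Phi = \Ep^{\omega(\Ep)}$ and to exploit the fact that, by the very definition of the index $\omega(\Ep)$, the map $\Phi$ is strictly positive: it carries every positive semidefinite matrix to a positive definite one. By Proposition \ref{Prop-int} this places $\Phi$ in the interior of the cone of positive linear maps on $\Md$. The heart of the argument is to convert this qualitative interiority into a quantitative decomposition of $\Phi$ that exhibits a genuine contraction factor strictly less than $1$.

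Concretely, I would apply the interior-point property with the maximally mixing map $\Omega$, defined by $\Omega(x) = \tr(x)\tfrac{1}{d}$, as the reference point. Since $\Omega$ lies in the cone of positive maps, interiority of $\Phi$ supplies a scalar $t > 1$ for which $(1-t)\Omega + t\Phi$ is again positive. Setting $\Psi = t\Phi + (1-t)\Omega$ and $\epsilon = (t-1)/t \in (0,1)$ produces the convex decomposition $\Phi = \epsilon\,\Omega + (1-\epsilon)\,\Psi$, where $\Psi$ is a positive map; moreover $\Psi$ is trace preserving, since both $\Phi$ and $\Omega$ are, so that $\tr(\Psi(x)) = t\,\tr(x) - (t-1)\,\tr(x) = \tr(x)$.

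The crucial observation is that $\Omega$ is constant on the set of states: for any density matrices $\rho,\sigma$ we have $\Omega(\rho) = \Omega(\sigma) = \tfrac{1}{d}1_d$, so the $\Omega$-part cancels in the difference and
\[\Phi(\rho) - \Phi(\sigma) = (1-\epsilon)\bigl(\Psi(\rho) - \Psi(\sigma)\bigr).\]
It then remains to bound $\Psi$ on differences of states. Since $\Psi$ is positive and trace preserving, it is a trace-norm contraction: writing the Hermitian trace-zero matrix $\rho-\sigma = h_+ - h_-$ with $h_\pm \geq 0$ and $\tr(h_+) = \tr(h_-)$, positivity gives $\Psi(h_\pm) \geq 0$, whence
\[\|\Psi(\rho-\sigma)\|_1 \leq \tr(\Psi(h_+)) + \tr(\Psi(h_-)) = \tr(h_+) + \tr(h_-) = \|\rho-\sigma\|_1,\]
using trace preservation (alternatively one may invoke the variational formula for $\|\cdot\|_1$ recalled above). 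Combining these two estimates yields
\[\|\Phi(\rho) - \Phi(\sigma)\|_1 \leq (1-\epsilon)\,\|\rho - \sigma\|_1,\]
and since $c(\Phi) := 1-\epsilon$ is a fixed constant strictly less than $1$, independent of $\rho$ and $\sigma$, the map $\Phi = \Ep^{\omega(\Ep)}$ is strictly contractive.

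I expect the main obstacle to be the passage from strict positivity to a \emph{uniform} contraction factor, namely extracting the decomposition $\Phi = \epsilon\Omega + (1-\epsilon)\Psi$ with a single $\epsilon > 0$ valid for all pairs of states. This is exactly where Proposition \ref{Prop-int} (interiority of strictly positive maps) does the essential work, and where the choice of $\Omega$ as reference point is decisive: it is precisely the cancellation $\Omega(\rho) - \Omega(\sigma) = 0$ on density matrices that converts the interior-point inequality into strict contractivity, rather than the mere contractivity enjoyed by every channel.
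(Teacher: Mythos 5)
Your proposal is correct and follows essentially the same route as the paper: invoke strict positivity of $\Ep^{\omega(\Ep)}$, use Proposition \ref{Prop-int} with $\Omega$ as the reference point to get the convex decomposition $\Ep^{\omega(\Ep)}=\frac{1}{1+\delta}\Psi+\frac{\delta}{1+\delta}\Omega$, note that $\Omega$ annihilates differences of states, and bound $\Psi$ by the fact that positive trace-preserving maps are trace-norm contractions. The only cosmetic difference is that you verify the contraction property of $\Psi$ via the Jordan decomposition $\rho-\sigma=h_+-h_-$ rather than the variational formula for $\|\cdot\|_1$ used in the paper, and you yourself note the two are interchangeable.
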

\begin{proof}
Note that since $\Ep^{\omega(\Ep)}$ sends every positive semidefinite matrices to positive definite matrices, by Proposition \ref{Prop-int},   $\Ep^{\omega(\Ep)}$ lies in the interior of the convex set of positive maps. This means that there exists a $t>1$ such that $(1-t)\Omega + t \Ep^{\omega(\Ep)}$ is a positive map, call it $\Psi$. Here $\Omega$ is the completely depolairizing channel introduced in the beginning of this section. 
 Hence letting $\delta=t-1>0$, we obtain 
\[\Psi=(1+\delta)\Ep^{\omega(\Ep)}-\delta \Omega.\] This in turn means 
\[\Ep^{\omega(\Ep)}=\frac{1}{1+\delta}\Psi+\frac{\delta}{1+\delta}\Omega.\]
Note also that $\Psi$ is a trace preserving positive map. 
Now we show that a trace preserving positive map $\Psi$ decreases the $||\cdot||_1$ norm. Indeed, for any $x\in \Md$
\begin{align*} 
||\Psi(x)||_1 &=\sup _{u\in \Md, uu^*=1}|Tr(\Psi(x)u)|\\
&=\sup _{u\in \Md, uu^*=1}|Tr(x\Psi^*(u))|\\
&\leq \sup _{u\in \Md, uu^*=1}||\Psi^*(u)||.||x||_1 \\
&\leq 1.||x||_1
\end{align*}
Here we used the fact that the adjoint $\Psi^*$ of the trace preserving positive map $\Psi$ is a unital positive map and hence $||\Psi^*(u)||\leq 1$, for any unitary $u\in \Md$. Hence for any two density matrices $\rho,\sigma$, we get $||\Psi(\rho-\sigma)||_1\leq ||\rho-\sigma||_1$. Now we calculate for any densities $\rho\neq \sigma$, 
\begin{align*}
\|\Ep^{\omega(\Ep)}(\rho-\sigma)\|_{1}&=\|(\frac{1}{1+\delta}\Psi+\frac{\delta}{1+\delta}\Omega)(\rho-\sigma)\|_{1}\\
&=\frac{1}{1+\delta}\|\Psi(\rho-\sigma)\|_{1}\\
&\leq \frac{1}{1+\delta}\|\rho-\sigma\|_{1}.
\end{align*}
This shows that $\Ep^{\omega(\Ep)}$ is strictly contractive.



\end{proof}
Recall that for a  primitive quantum channel $$\Ep:\Md\rightarrow\Md$$ with $n$ linearly independent Kraus operators $\{a_1,\cdots, a_n\}$, $i(\Ep)$ is defined as follows:
\[i(\Ep)=\min\{k\in \mathbb{N}: \rm{Span}\{a_{i_1}\cdots a_{i_k}\}=\Md\}.\]
\begin{corollary}
For a primitive quantum channel $\Ep$, $\Ep^{i(\Ep)}$ is strictly contractive. 
\end{corollary}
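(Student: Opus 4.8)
The plan is to obtain this corollary directly from Theorem \ref{thm-strictly contrctv} together with the inequality $\omega(\Ep)\le i(\Ep)$ recorded in Section 2. First I would recall that for a primitive channel both quantities are finite and satisfy $\omega(\Ep)\le i(\Ep)$, so that $i(\Ep)-\omega(\Ep)$ is a nonnegative integer. This lets me factor the iterate as
\[
\Ep^{i(\Ep)}=\Ep^{\,i(\Ep)-\omega(\Ep)}\circ \Ep^{\omega(\Ep)},
\]
treating $\Ep^{\omega(\Ep)}$ as the strictly contracting core supplied by Theorem \ref{thm-strictly contrctv} and the remaining iterations as a harmless (non-strict) contraction applied afterwards.

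The key step is to combine two norm estimates. By Theorem \ref{thm-strictly contrctv} there is a constant $c<1$ such that $\|\Ep^{\omega(\Ep)}(\rho-\sigma)\|_1\le c\,\|\rho-\sigma\|_1$ for all densities $\rho\ne\sigma$. Next I would invoke the fact, already established inside the proof of Theorem \ref{thm-strictly contrctv}, that any trace preserving positive map does not increase the $\|\cdot\|_1$ norm on $\Md$; in particular the channel $\Ep^{\,i(\Ep)-\omega(\Ep)}$ is a (non-strict) trace-norm contraction. Applying this to the element $\Ep^{\omega(\Ep)}(\rho-\sigma)$ and chaining the two bounds gives
\[
\|\Ep^{i(\Ep)}(\rho-\sigma)\|_1\le \|\Ep^{\omega(\Ep)}(\rho-\sigma)\|_1\le c\,\|\rho-\sigma\|_1,
\]
which is exactly strict contractivity with modulus at most $c<1$.

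I do not anticipate a serious obstacle: the corollary is essentially a monotonicity observation stacked on Theorem \ref{thm-strictly contrctv}. The only points requiring a little care are verifying the direction of $\omega(\Ep)\le i(\Ep)$, so that the extra iterations are composed \emph{after} the strictly contracting iterate, and confirming that $\Ep^{\,i(\Ep)-\omega(\Ep)}$, being a composition of channels, is itself trace preserving and positive so that the trace-norm contractivity legitimately applies. Both are routine given the results already proved.
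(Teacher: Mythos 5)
Your proposal is correct and follows essentially the same route as the paper: factor $\Ep^{i(\Ep)}=\Ep^{i(\Ep)-\omega(\Ep)}\circ\Ep^{\omega(\Ep)}$ using $\omega(\Ep)\leq i(\Ep)$, apply Theorem \ref{thm-strictly contrctv} to the core, and absorb the remaining iterations as a non-expansive trace-preserving map. The only difference is that you spell out the trace-norm non-expansiveness step that the paper dismisses as ``easy to see.''
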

\begin{proof}
As was shown in \cite{Wielandt}, we have the inequality 
$\omega(\Ep)\leq i(\Ep)$. Now if $i(\Ep)=\omega(\Ep)+m$, for some integer $m\geq 0$, then we have $\Ep^{i(\Ep)}=\Ep^m\circ\Ep^{\omega(\Ep)}$. By Theorem \ref{thm-strictly contrctv} we know that $\Ep^{\omega(\Ep)}$ is strictly contractive. It is easy to see that composition of any channel with a strictly contractive channel is also strictly contractive. Hence the result. 
\end{proof}
\section{Summary}
The main achievement of this article is twofold. Firstly, It improves the quantum Wielandt bound for a specific class of maps and this bound depends only on the dimension of the domain and not the maps themselves. Secondly, the maps considered here are positive maps as opposed to completely positive maps, hence working with weaker assumptions. The main  methodology used to achieve this new bound is the spectral behaviour of positive linear maps, specifically the multiplicative properties of positive maps and exploiting some related results in \cite{miza},\cite{Sam-Miza}.   
\section{Acknowledgements}
MR is supported by a Postdoctoral fellowship at the Department of Pure Mathematics, University of Waterloo. The author would like to thank Professor Vern Paulsen and Sam Jaques for many insightful discussion. We thank Prof. Mateusz Micha{\l}ek and Prof. Yaroslav Shitov for pointing out their recent work in this topic. Also we thank Micha{\l} Bia{\l}o\' nczyk for helpful discussion on this article. The author would also like to thank the anonymous referees for a lot of helpful suggestions especially related to the proof of the Theorem \ref{thm-strictly contrctv}. 
\bibliography{wielandt}
\bibliographystyle{amsplain}
E-mail address: mizanur.rahaman@uwaterloo.ca 

\end{document}